\providecommand{\U}[1]{\protect \rule{.1in}{.1in}}
\newtheorem{theorem}{Theorem}[section]
\newtheorem{proposition}[theorem]{Proposition}
\newenvironment{proof}[1][Proof]{\noindent \textbf{#1.} }{\  \rule{0.5em}{0.5em}}
\begin{document}

\title{On the Nash Equilibria\\of a Simple Discounted Duel}
\author{Athanasios Kehagias}
\maketitle

\begin{abstract}
We formulate and study a two-player \emph{static duel} game as a
\emph{nonzero-sum discounted stochastic game.} Players $P_{1},P_{2}$ are
standing in place and, in each turn, one or both \emph{may} shoot at the
\textquotedblleft other\textquotedblright \ player. If $P_{n}$ shoots at
$P_{m}$ ($m\neq n$), either he hits and kills him (with probability $p_{n}$)
or he misses him and $P_{m}$ is unaffected (with probability $1-p_{n}$). The
process continues until at least one player dies; if nobody ever dies, the
game lasts an infinite number of turns. Each player receives unit payoff for
each turn in which he remains alive; no payoff is assigned to killing the
opponent. We show that the the always-shooting strategy is a NE\ but, in
addition, the game also possesses \textquotedblleft \emph{cooperative}%
\textquotedblright \ (i.e., non-shooting) Nash equilibria in both stationary
and nonstationary strategies. A certain similarity to the repeated Prisoner's
Dilemma is also noted and discussed.

\end{abstract}

\section{Introduction\label{sec01}}

In this paper we study a two-player \emph{static duel} game played in turns.
Players $P_{1},P_{2}$ are standing in place and, in each turn, one or both
\emph{may} shoot at the \textquotedblleft other\textquotedblright \ player. If
$P_{n}$ shoots at $P_{m}$ ($m\neq n$), either he hits and kills him or he
misses him and $P_{m}$ is unaffected; the respective probabilities are $p_{n}$
and $1-p_{n}$. The process continues until at least one player dies; it is
possible that nobody ever dies and the game lasts an infinite number of turns.
We formulate the above as a \emph{nonzero-sum discounted stochastic game.} The
game rules and the players' payoff function will be presented in the next section.

Little work has been done on the static duel. Actually, as far as we know, it
has only been studied as a preliminary step in the study of the
\textquotedblleft static \emph{truel}\textquotedblright, in which \emph{three}
stationary players shoot at each other. Early works on the static truel are
\cite{Gardner1966,Kinnaird1946,Larsen1948,Mosteller1987,Shubik1954} in which
the postulated game rules guarantee the existence of \emph{exactly one}
survivor (\textquotedblleft winner\textquotedblright).{} A more general
analysis appears in \cite{Knuth1973} which considers the possibility of
\textquotedblleft cooperation\textquotedblright \ between the players. This
idea is further studied in
\cite{Kilgour1971,Kilgour1975,Kilgour1977,Zeephongsekul1991}. Recent papers on
the truel include
\cite{Amengual2005a,Amengual2005b,Amengual2006,Bossert2002,Brams1997,Brams2001,Brams2003,Dorraki2019,Toral2005,Wegener2021,Xu2012}%
. \footnote{Let us also note the existence of an extensive literature on a
quite different type of duel games, which essentially are \emph{games of
timing} \cite{Barron2013,Dresher1961,Karlin1959}. However, this literature is
not relevant to the game studied in this paper.}

While the above papers focus on various forms of the static truel, we believe
that the static duel is interesting in its own right and\ has not received the
attention it deserves. In particular we will show that, under our formulation,
the static duel has a certain similarity to the \emph{repeated Prisoner's
Dilemma }and possesses \textquotedblleft \emph{cooperative}\textquotedblright%
\ Nash equilibria in \emph{nonstationary strategies}.

This paper is structured as follows. In Section \ref{sec02} we define the game
rigorously. In Section \ref{sec03} we introduce several stationary and
nonstationary strategies and compute their expected payoffs. In Section
\ref{sec04} we prove that certain pairs of the previously defined strategies
are Nash equilibria. In Section \ref{sec05} we discuss the obtained results
and the connection of the static duel to the repeated Prisoner's Dilemma.
Finally, in Section \ref{sec06} we summarize our results and propose some
future research directions.

\section{The Game\label{sec02}}

The game involves players $P_{1},P_{2}$ and proceeds at discrete time steps
(\emph{rounds}) $t\in \left \{  1,2,...\right \}  $. The \emph{state }at time $t$
is
\[
s\left(  t\right)  =\left(  s_{1}\left(  t\right)  ,s_{2}\left(  t\right)
\right)  \in S=\left \{  \left(  1,1\right)  ,\left(  1,0\right)  ,\left(
0,1\right)  ,\left(  0,0\right)  ,\left(  \tau,\tau \right)  \right \}  .
\]
For $n\in \left \{  1,2\right \}  $, $s_{n}\left(  t\right)  $ \noindent is
$P_{n}$'s \emph{state }at $t\in \left \{  0,1,2,...\right \}  $ and can be
\[%
\begin{array}
[c]{ll}%
1: & \text{when }P_{n}\text{ is alive,}\\
0: & \text{when }P_{n}\text{ dies in the current round,}\\
\tau: & \text{when one or both players have died in a previous round.}%
\end{array}
\]
$P_{n}$'s \emph{action} at $t\in \left \{  1,2,...\right \}  $ is $f_{n}\left(
t\right)  $, which can be $1$ ($P_{n}$ is shooting) or $0$ ($P_{n}$ is not
shooting). If $s_{n}\left(  t-1\right)  \neq1$, $P_{n}$ cannot shoot at $t$
and $f_{n}\left(  t\right)  $ must equal $0$; if $s_{n}\left(  t-1\right)  =1$
then $f_{n}\left(  t\right)  $ can be either $0$ or $1$. When $f_{n}\left(
t\right)  =1$, $s_{-n}\left(  t\right)  =0$ (i.e., $P_{-n}$ dies\footnote{In
the sequel we use the standard game theoretic notation by which $s_{-1}=s_{2}%
$, $s_{-2}=s_{1}$. The same notation is used for players, actions etc.}) with
probability $p_{n}\in \left(  0,1\right)  $ and $s_{-n}\left(  t\right)  =1$
with probability $1-p_{n}$. We set $f\left(  t\right)  =\left(  f_{1}\left(
t\right)  ,f_{2}\left(  t\right)  \right)  $ and $\mathbf{p}=\left(
p_{1},p_{2}\right)  $. Note that we have assumed that $p_{1}$, $p_{2}$ are
different from both zero and one.

The game starts at an initial state $s\left(  0\right)  $; obviously, the main
case of interest is $s\left(  0\right)  =\left(  1,1\right)  $. At times
$t\in \left \{  1,2,...\right \}  $ the players choose simultaneously \ the
actions $f_{1}\left(  t\right)  $, $f_{2}\left(  t\right)  $ and the game
moves to state $s\left(  t\right)  $ according to the conditional \emph{state
transition probability} $\Pr \left(  s\left(  t\right)  |s\left(  t-1\right)
,f\left(  t\right)  \right)  $. \noindent In Figure 1 we present the state
transition diagram, in which the action-dependent\ transition probabilities
are written next to the edges; it is easily verified that these probabilities
conform to the game rules. \noindent The figure shows that the game starting
at $\left(  1,1\right)  $, lasts an infinite number of rounds and two
possibilities exist.

\begin{enumerate}
\item The game always stays in $\left(  1,1\right)  $ (no player is ever killed).

\item At some $t^{\prime}$ the game moves to a state $s\in \left \{  \left(
1,0\right)  ,\left(  0,1\right)  ,\left(  0,0\right)  \right \}  $ (one or both
players are killed) and at $t^{\prime}+1$ the game moves to the \emph{terminal
}state $\left(  \tau,\tau \right)  $, where it stays for ever after.
\end{enumerate}

\bigskip

\begin{minipage}{0.95\textwidth}
\medskip
\begin{center}
\includegraphics[scale=0.8,trim={2cm 16cm 0 3cm},clip]{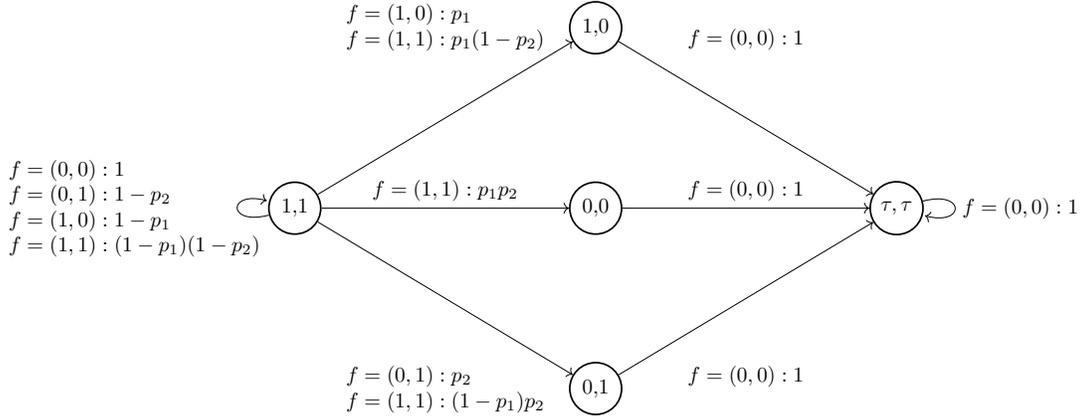}
\captionof{figure}{State transition diagram of the static duel game.}
\end{center}
\medskip
\end{minipage}

\bigskip

A \emph{finite history} is a sequence $h=s\left(  0\right)  f\left(  1\right)
s\left(  1\right)  ...f\left(  T\right)  s\left(  T\right)  $; an
\emph{infinite history} is an $h=s\left(  0\right)  f\left(  1\right)
s\left(  1\right)  ...\ $. An \emph{admissible} history is one which conforms
to the game rules; the set of all admissible finite (resp.
infinite)\ histories is denoted by $H$ (resp. $H^{\infty}$). For every history
$h$ and for $n\in \left \{  1,2\right \}  $, we define $P_{n}$'s \emph{total
payoff function} to be
\[
Q_{n}\left(  h\right)  =\sum_{t=0}^{\infty}\gamma^{t}q_{n}\left(  s\left(
t\right)  \right)
\]
where $\gamma \in \left(  0,1\right)  $ is the \emph{discounting factor } and
$q_{n}:S\rightarrow \mathbb{R}$ is $P_{n}$'s \emph{stage payoff function},
defined as follows.%
\[%
\begin{tabular}
[c]{ll}%
$q_{1}\left(  \tau,\tau \right)  =0,$ & $q_{2}\left(  \tau,\tau \right)  =0,$\\
$q_{1}\left(  1,1\right)  =1,$ & $q_{2}\left(  1,1\right)  =1,$\\
$q_{1}\left(  1,0\right)  =\frac{1}{1-\gamma},$ & $q_{2}\left(  1,0\right)
=0,$\\
$q_{1}\left(  0,1\right)  =0,$ & $q_{2}\left(  0,1\right)  =\frac{1}{1-\gamma
},$\\
$q_{1}\left(  0,0\right)  =0,$ & $q_{2}\left(  0,0\right)  =0.$%
\end{tabular}
\  \  \  \  \  \  \  \
\]
The above values indicate that each player receives one payoff unit for every
turn in which he stays alive; the payoff $q_{1}\left(  1,0\right)  =\frac
{1}{1-\gamma}$ incorporates the infinite payoff sequence $\sum_{t=0}^{\infty
}\gamma^{t}1=\frac{1}{1-\gamma}$ (this will result when $P_{1}$ kills $P_{2}$
and stays alive for the infinite number of subsequent turns). Note that a
player receives no direct payoff from killing his opponent, but he has the
indirect benefit of removing the possibility of being killed himself.

A \emph{strategy} for $P_{n}\ $is a function $\sigma_{n}:H\rightarrow \left[
0,1\right]  $ which corresponds to every finite history $h$ the probability
\[
x_{n}=\sigma_{n}\left(  h\right)  =\Pr \left(  \text{\textquotedblleft}%
P_{n}\text{ shoots at }P_{-n}\text{\textquotedblright}\right)  .
\]
A \emph{stationary strategy} is a $\sigma_{n}$ depending only on the current
state $s$, hence we simply write\ $x_{n}=\sigma_{n}\left(  s\right)  $. A
\emph{strategy profile} is a vector $\sigma=\left(  \sigma_{1},\sigma
_{2}\right)  $. We denote the set of all \emph{admissible} strategies (those
which are compatible with the game rules)\ by $\Sigma$ and the set of all
admissible stationary strategies by $\overline{\Sigma}$.

Given the initial state $s\left(  0\right)  $ and the strategies $\sigma_{1}$
and $\sigma_{2}$, used by $P_{1}$ and $P_{2}$ respectively, a probability
measure is defined on the set of all infinite histories. Since $\gamma
\in \left(  0,1\right)  $,\ the \emph{total expected payoffs}
\[
\forall n\in \left \{  1,2\right \}  :\overline{Q}_{n}\left(  s\left(  0\right)
,\sigma_{1},\sigma_{2}\right)  =\mathbb{E}\left(  Q_{n}\left(  h\right)
|s\left(  0\right)  ,\sigma_{1},\sigma_{2}\right)
\]
are well defined.

We have thus formulated the simultaneous static duel as a discounted
stochastic game, which we will denote by $\Gamma \left(  s\left(  0\right)
,\gamma,\mathbf{p}\right)  $ or simpl;y $\Gamma \left(  s\left(  0\right)
,\gamma \right)  $, when $\left(  p_{1},p_{2}\right)  $ is fixed. Our main
interest is in the \emph{nonzero-sum} game $\Gamma \left(  \left(  1,1\right)
,\gamma \right)  $. We assume that $P_{1}$ and $P_{2}$ attempt to reach a
\emph{Nash equilibrium} (NE), i.e., a strategy profile $\left(  \widehat
{\sigma}_{1},\widehat{\sigma}_{2}\right)  $ such that%
\[
\forall n\in \left \{  1,2\right \}  :\forall \sigma_{n}\in \Sigma:\overline{Q}%
_{n}\left(  \left(  1,1\right)  ,\widehat{\sigma}_{n},\widehat{\sigma}%
_{-n}\right)  \geq \overline{Q}_{n}\left(  \left(  1,1\right)  ,\sigma
_{n},\widehat{\sigma}_{-n}\right)  .
\]

\section{Some Basic Strategies and Their Payoffs\label{sec03}}

In this section we introduce several strategies which we will use in our later
exploration of Nash equilibria.

\subsection{The Stationary Strategy $\sigma^{S}$}

When $P_{n}$ ($n\in \left \{  1,2\right \}  $) uses a stationary admissible
strategy $\sigma_{n}$, we have $\sigma_{n}\left(  s\right)  =0$ for
$s\in \left \{  \left(  1,0\right)  ,\left(  0,1\right)  ,\left(  0,0\right)
,\left(  \tau,\tau \right)  \right \}  $ and $\sigma_{n}$ is fully specified by
the value $\sigma_{n}\left(  1,1\right)  =x_{n}$. Hence we will sometimes
write $\overline{Q}_{n}\left(  \left(  1,1\right)  ,x_{1},x_{2}\right)  $ in
place of $\overline{Q}_{n}\left(  \left(  1,1\right)  ,\sigma_{1},\sigma
_{2}\right)  $.

Let $V_{1}^{S}\left(  x_{1},x_{2}\right)  =\overline{Q}_{1}\left(  \left(
1,1\right)  ,x_{1},x_{2}\right)  $; for brevity we will also write simply
$V_{1}^{S}$. Then $V_{1}^{S}$ satisfies the equation%

\begin{align}
V_{1}^{S}  &  =1+\gamma p_{1}x_{1}\left(  x_{2}\left(  1-p_{2}\right)
+(1-x_{2})\right)  \frac{1}{1-\gamma}\label{eq0201}\\
&  \qquad \qquad \qquad+\gamma \left(  x_{1}\left(  1-p_{1}\right)  +\left(
1-x_{1}\right)  \right)  \left(  x_{2}\left(  1-p_{2}\right)  +\left(
1-x_{2}\right)  \right)  V_{1}^{S},\nonumber
\end{align}
obtained by the following reasoning. When the game is in state $s=\left(
1,1\right)  $, $P_{1}$'s expected payoff is one unit for the current state
plus the discounted expected payoff from the subsequent state $s^{\prime}$,
for which we have the following possibilities.

\begin{enumerate}
\item $s^{\prime}=\left(  1,0\right)  $ when $P_{1}$ shoots and hits $P_{2}$
and $P_{2}$ either shoots and misses or does not shoot; the respective
probability is $p_{1}x_{1}\left(  x_{2}\left(  1-p_{2}\right)  +(1-x_{2}%
)\right)  $. The total expected payoff of this case is $\overline{Q}%
_{1}\left(  \left(  1,0\right)  ,x_{1},x_{2}\right)  =\frac{1}{1-\gamma}$.

\item $s^{\prime}=\left(  1,1\right)  $ when \emph{each of} $P_{1}$ and
$P_{2}$ either shoots and misses or does not shoot; the respective probability
is $\left(  x_{1}\left(  1-p_{1}\right)  +\left(  1-x_{1}\right)  \right)
\left(  x_{2}\left(  1-p_{2}\right)  +\left(  1-x_{2}\right)  \right)  $. In
this case we have returned to the starting state $\left(  1,1\right)  $ and
the additional total expected payoff is again $\overline{Q}_{1}\left(  \left(
1,1\right)  ,x_{1},x_{2}\right)  $.

\item We also have the possibilities of moving into $\left(  0,1\right)  $ and
$\left(  0,0\right)  $, but these yield zero payoff to $P_{1}$, so they are
not included in (\ref{eq0201}).
\end{enumerate}

\noindent Solving (\ref{eq0201}) we get, after some algebraic
calculations\footnote{The calculations required to obtain the solution have
been performed by the computer algebra system Maple and then verified by hand.
This is also true of additional (sometimes quite complicated)\ calculations
required in the rest of the paper.}, that
\begin{equation}
V_{1}^{S}\left(  x_{1},x_{2}\right)  =\frac{1-\gamma \left(  1-p_{1}%
x_{1}\left(  1-p_{2}x_{2}\right)  \right)  }{\left(  1-\gamma \right)  \left(
1-\gamma \left(  1-p_{1}x_{1}\right)  \left(  1-p_{2}x_{2}\right)  \right)  }.
\label{eq0202}%
\end{equation}
In what follows we will often work with the \emph{normalized} total expected
payoff. In this case, it is
\begin{equation}
v_{1}^{S}\left(  x_{1},x_{2}\right)  =\left(  1-\gamma \right)  \overline
{Q}_{1}\left(  \left(  1,1\right)  ,x_{1},x_{2}\right)  =\frac{1-\gamma \left(
1-p_{1}x_{1}\left(  1-p_{2}x_{2}\right)  \right)  }{1-\gamma \left(
1-p_{1}x_{1}\right)  \left(  1-p_{2}x_{2}\right)  } \label{eq0203}%
\end{equation}
Formulas for $V_{2}^{S}\left(  x_{1},x_{2}\right)  =\overline{Q}_{2}\left(
\left(  1,1\right)  ,x_{1},x_{2}\right)  $ and $v_{2}^{S}\left(  x_{1}%
,x_{2}\right)  =\left(  1-\gamma \right)  \overline{Q}_{2}\left(  \left(
1,1\right)  ,x_{1},x_{2}\right)  $ can be obtained by interchanging the
indices $1$ and $2\ $in (\ref{eq0202})-(\ref{eq0203}).

\subsection{The Cooperating\ Strategy $\sigma^{C}$}

The stationary \textquotedblleft cooperating\textquotedblright \ (the name will
be justified in Section \ref{sec05})\ strategy $\sigma^{C}$ is defined by
\[
\sigma^{C}\left(  1,1\right)  =0\text{, which means the player never shoots.}%
\]
Obviously, $\sigma^{C}$ is $\sigma^{S}$ with $x_{n}=0$ and $\overline{Q}%
_{n}\left(  \left(  1,1\right)  ,\sigma^{C},\sigma^{C}\right)  =\overline
{Q}_{n}\left(  \left(  1,1\right)  ,0,0\right)  $. Hence we obtain $V_{1}%
^{C}=\overline{Q}_{1}\left(  \left(  1,1\right)  ,\sigma^{C},\sigma
^{C}\right)  =\overline{Q}_{1}\left(  \left(  1,1\right)  ,0,0\right)  $ by
setting $x_{1}=x_{2}=0$ in (\ref{eq0202}). Consequently the expected and
normalized expected total payoff are
\[
V_{1}^{C}=\frac{1}{1-\gamma},\qquad v_{1}^{C}=\left(  1-\gamma \right)
V_{1}^{C}=1
\]
Formulas for $V_{2}^{C}=\overline{Q}_{2}\left(  \sigma^{C},\sigma^{C}\right)
$, $v_{2}^{c}=\left(  1-\gamma \right)  V_{2}^{C}$ are obtained by exchanging
the indices $1$ and $2$ in the above formulas.

\subsection{The Defecting\ Strategy $\sigma^{D}$}

The stationary \textquotedblleft defecting\textquotedblright \ (the name will
be justified in Section \ref{sec05})\ strategy $\sigma^{D}$ is defined by
\[
\sigma^{D}\left(  1,1\right)  =1\text{, which means the player always shoots
with probability one.}%
\]
Obviously, $\sigma^{D}$ is $\sigma^{S}$ with $x_{n}=1$ and $\overline{Q}%
_{n}\left(  \left(  1,1\right)  ,\sigma^{D},\sigma^{D}\right)  =\overline
{Q}_{n}\left(  \left(  1,1\right)  ,1,1\right)  $. Hence we obtain $V_{1}%
^{D}=\overline{Q}_{1}\left(  \left(  1,1\right)  ,\sigma^{D},\sigma
^{D}\right)  =\overline{Q}_{1}\left(  \left(  1,1\right)  ,1,1\right)  $ by
setting $x_{1}=x_{2}=1$ in (\ref{eq0202}). We then get the expected and
normalized expected total payoff to be%
\begin{align*}
V_{1}^{D}  &  =\frac{1-\gamma \left(  1-p_{1}\left(  1-p_{2}\right)  \right)
}{\left(  1-\gamma \right)  \left(  1-\gamma \left(  1-p_{1}\right)  \left(
1-p_{2}\right)  \right)  }\\
v_{1}^{D}  &  =\left(  1-\gamma \right)  V_{1}^{D}\left(  \gamma \right)
=\frac{1-\gamma \left(  1-p_{1}\left(  1-p_{2}\right)  \right)  }%
{1-\gamma \left(  1-p_{1}\right)  \left(  1-p_{2}\right)  }%
\end{align*}
Formulas for $V_{2}^{D}=\overline{Q}_{2}\left(  \sigma^{D},\sigma^{D}\right)
$, $v_{2}^{D}=\left(  1-\gamma \right)  V_{2}^{D}$ are obtained by exchanging
the indices $1$ and $2$ in the above formulas.

\subsection{The Early-Shooting Strategy $\sigma^{DC\left(  K\right)  }$}

The nonstationary \textquotedblleft early-shooting\textquotedblright \ strategy
$\sigma^{DC\left(  K\right)  }$ is to shoot (with probability one)\ only at
times $1,2,...,K$, where $K$ is a parameter of the strategy. Let
$V_{1}^{DC\left(  K\right)  }=Q_{1}\left(  \left(  1,1\right)  ,\sigma
^{DC\left(  K\right)  },\sigma^{DC\left(  K\right)  }\right)  $; then
$V_{1}^{DC\left(  K\right)  }$ satisfies the equation:%
\begin{align}
V_{1}^{DC\left(  K\right)  }  &  =1\nonumber \\
&  +\gamma \left(  p_{1}\left(  1-p_{2}\right)  \frac{1}{1-\gamma}+\left(
1-p_{1}\right)  \left(  1-p_{2}\right)  \right) \nonumber \\
&  +\gamma^{2}\left(  \left(  1-p_{1}\right)  p_{1}\left(  1-p_{2}\right)
^{2}\frac{1}{1-\gamma}+\left(  1-p_{1}\right)  ^{2}\left(  1-p_{2}\right)
^{2}\right) \nonumber \\
&  +...\label{eq0204}\\
&  +\gamma^{K-1}\left(  \left(  1-p_{1}\right)  ^{K-2}p_{1}\left(
1-p_{2}\right)  ^{K-1}\frac{1}{1-\gamma}+\left(  1-p_{1}\right)  ^{K-1}\left(
1-p_{2}\right)  ^{K-1}\right) \nonumber \\
&  +\gamma^{K}\left(  \left(  1-p_{1}\right)  ^{K-1}p_{1}\left(
1-p_{2}\right)  ^{K}\frac{1}{1-\gamma}+\left(  1-p_{1}\right)  ^{K}\left(
1-p_{2}\right)  ^{K}V_{1}^{C}\right) \nonumber
\end{align}
This equation is justified as follows.

\begin{enumerate}
\item At time $t=0$, $P_{1}$ receives a payoff of one unit.

\item At times $t=1,...,K-1$, the expected payoffs are the following.

\begin{enumerate}
\item With probability $\left(  1-p_{1}\right)  ^{t-1}p_{1}\left(
1-p_{2}\right)  ^{t}$, $P_{1}$ misses at times $t^{\prime}=1,...,t-1$ and
succeeds at time $t^{\prime}$, while $P_{2}$ misses as times $t^{\prime
}=1,...,t$. In this case $P_{1}$ receives payoff $\frac{1}{1-\gamma}$.

\item With probability $\left(  1-p_{1}\right)  ^{t}\left(  1-p_{2}\right)
^{t}$, both $P_{1}$ and $P_{2}$ miss at times $t^{\prime}=1,...,t$. In this
case $P_{1}$ receives payoff $1$.

\item All other possibilities yield zero payoff, so they are not included in
the equation.
\end{enumerate}

\item At time $t=K$, the expected payoffs are the following.

\begin{enumerate}
\item With probability $\left(  1-p_{1}\right)  ^{K-1}p_{1}\left(
1-p_{2}\right)  ^{K}$, $P_{1}$ misses at times $t^{\prime}=1,...,K-1$ and
succeeds at time $t^{\prime}=K$, while $P_{2}$ misses as times $t^{\prime
}=1,...,K$. In this case $P_{1}$ receives payoff $\frac{1}{1-\gamma}$.

\item With probability $\left(  1-p_{1}\right)  ^{K}\left(  1-p_{2}\right)
^{K}$, both $P_{1}$ and $P_{2}$ miss at times $t^{\prime}=1,...,K$. In this
case both players will never shoot at subsequent times, so we have returned to
the starting state $\left(  1,1\right)  $ and the additional total expected
payoff is $V_{1}^{C}=\overline{Q}_{1}\left(  \left(  1,1\right)  ,\sigma
^{C},\sigma^{C}\right)  $. $\ $

\item All other possibilities yield zero payoff, so they are not included in
the equation.
\end{enumerate}
\end{enumerate}

\noindent Substituting in (\ref{eq0204})\ the expression for $V_{1}^{C}$ we
obtain the following expressions for expected total and normalized expected
total payoff:
\begin{align*}
V_{1}^{DC\left(  K\right)  }  &  =\frac{1+\gamma^{K+1}(1-p_{1})^{K}%
(1-p_{2})^{K}p_{2}-\gamma(1-p_{1}+p_{1}p_{2})}{\left(  1-\gamma \right)
\left(  1-\gamma \left(  1-p_{1}\right)  \left(  1-p_{2}\right)  \right)  }\\
v_{1}^{DC\left(  K\right)  }  &  =\left(  1-\gamma \right)  V_{1}^{DC\left(
K\right)  }=\frac{1+\gamma^{K+1}(1-p_{1})^{K}(1-p_{2})^{K}p_{2}-\gamma
(1-p_{1}+p_{1}p_{2})}{1-\gamma \left(  1-p_{1}\right)  \left(  1-p_{2}\right)
}%
\end{align*}
Formulas for $V_{2}^{DC\left(  K\right)  }$, $v_{2}^{DC\left(  K\right)  }$
are obtained by exchanging the indices $1$ and $2$ in the above formulas.

\subsection{The Late-Shooting Strategy $\sigma^{CD\left(  K\right)  }$}

The nonstationary \textquotedblleft late-shooting\textquotedblright \ strategy
$\sigma^{CD\left(  K\right)  }$ is to shoot (with probability one)\ only at
times $K,K+1,...$, where $K$ is a parameter of the strategy. Let
$V_{1}^{CD\left(  K\right)  }=Q_{1}\left(  \left(  1,1\right)  ,\sigma
^{CD\left(  K\right)  },\sigma^{CD\left(  K\right)  }\right)  $; then
$V_{1}^{CD\left(  K\right)  }$ satisfies the equation:%
\begin{equation}
V_{1}^{CD\left(  K\right)  }=1+\gamma+...+\gamma^{K-1}+\gamma^{K}\left(
p_{1}\left(  1-p_{2}\right)  \frac{1}{1-\gamma}+\left(  1-p_{1}\right)
\left(  1-p_{2}\right)  V_{1}^{D}\right)  \label{eq0205}%
\end{equation}
This equation is justified as follows.

\begin{enumerate}
\item At times $t=0,1,...,K-1$, $P_{1}$ receives discounted payoff of one unit.

\item At time $t=K$ we have the following possibilities.

\begin{enumerate}
\item With probability $p_{1}\left(  1-p_{2}\right)  $, $P_{1}$ hits $P_{2}$,
while $P_{2}$ misses. In this case $P_{1}$ receives payoff $\frac{1}{1-\gamma
}$.

\item With probability $\left(  1-p_{1}\right)  \left(  1-p_{2}\right)  $,
both $P_{1}$ and $P_{2}$ miss. In this case both $P_{1}$ and $P_{2}$ revert to
strategy $\sigma^{D}$ and $P_{1}$ receives total expected payoff $V_{1}%
^{D}=\overline{Q}_{1}\left(  \left(  1,1\right)  ,\sigma^{D},\sigma
^{D}\right)  $.

\item All other possibilities yield zero payoff, so they are not included in
the equation.
\end{enumerate}
\end{enumerate}

\noindent Substituting in (\ref{eq0205}) the previously obtained expression
for $V_{1}^{D}$ we get the following expressions for expected total and
normalized expected total payoff:
\begin{align*}
V_{1}^{CD\left(  K\right)  }  &  =\frac{1-\gamma^{K}p_{2}-\gamma
(1-p_{1})\left(  1-p_{2}\right)  }{\left(  1-\gamma \right)  \left(
1-\gamma \left(  1-p_{1}\right)  \left(  1-p_{2}\right)  \right)  }\\
v_{1}^{CD\left(  K\right)  }  &  =\left(  1-\gamma \right)  V_{1}^{CD\left(
K\right)  }=\frac{1-\gamma^{K}p_{2}-\gamma(1-p_{1})\left(  1-p_{2}\right)
}{1-\gamma \left(  1-p_{1}\right)  \left(  1-p_{2}\right)  }%
\end{align*}
Formulas for $V_{2}^{CD\left(  K\right)  }$, $v_{2}^{CD\left(  K\right)  }$
are obtained by exchanging the indices $1$ and $2$ in the above formulas.

\subsection{The Periodic-Shooting Strategy $\sigma^{P\left(  M\right)  }$}

The nonstationary \textquotedblleft periodic-shooting\textquotedblright%
\ strategy $\sigma^{P\left(  M\right)  }$ is to shoot only at times
$M+1,2M+2,...$, where $M$ is strategy parameter. Let $V_{1}^{P\left(
M\right)  }=Q_{1}\left(  \left(  1,1\right)  ,\sigma^{M},\sigma^{M}\right)  $;
by reasoning similar to that of the previous cases, we see that $V_{1}%
^{P\left(  M\right)  }$ satisfies the equation%
\begin{align}
V_{1}^{P\left(  M\right)  }  &  =1+\gamma+...+\gamma^{M}\nonumber \\
&  +\gamma^{M+1}\left(  p_{1}\left(  1-p_{2}\right)  \frac{1}{1-\gamma
}+\left(  1-p_{1}\right)  \left(  1-p_{2}\right)  V_{1}^{P\left(  M\right)
}\right)  \label{eq0206}%
\end{align}
Solving (\ref{eq0206})\ we get the following expressions for expected total
and normalized expected total payoff:
\begin{align*}
V_{1}^{P\left(  M\right)  }  &  =\frac{1-\gamma^{M+1}\left(  1-p_{1}\left(
1-p_{2}\right)  )\right)  }{\left(  1-\gamma \right)  \left(  1-\gamma
^{M+1}\left(  1-p_{1}\right)  \left(  1-p_{2}\right)  \right)  }\\
v_{1}^{P\left(  M\right)  }  &  =\frac{1-\gamma^{M+1}\left(  1-p_{1}\left(
1-p_{2}\right)  \right)  }{1-\gamma^{M+1}\left(  1-p_{1}\right)  \left(
1-p_{2}\right)  }%
\end{align*}
Formulas for $V_{2}^{P\left(  M\right)  }$, $v_{2}^{P\left(  M\right)  }$ are
obtained by exchanging the indices $1$ and $2$ in the above formulas.

\subsection{Grim Strategies}

Any $P_{n}$'s strategy $\sigma_{n}$ can be used to define a corresponding
\emph{grim} strategy $\widetilde{\sigma}_{n}$ as follows:%
\[%
\begin{array}
[c]{ll}%
\widetilde{\sigma}_{n}: & \text{As long as }P_{-n}\text{ uses }\sigma
_{n}\text{, }P_{n}\text{ also uses }\sigma_{n}\text{;}\\
& \text{if at the }t\text{-th turn }P_{-n}\text{ deviates from }\sigma
_{n}\text{, }P_{n}\text{\ uses }\sigma^{D}\text{ in all subsequent turns.}%
\end{array}
\]
For example, the grim-cooperating strategy $\widetilde{\sigma}^{C}$ dictates
that: $P_{n}$ never shoots, as long as $P_{-n}$ does not shoot either; if at
the $t$-th turn \thinspace$P_{-n}$ shoots then, starting at the $\left(
t+1\right)  $-th turn, $P_{n}$ will always shoot with probability one.
Similarly we can get:

\begin{enumerate}
\item the grim-defecting strategy $\widetilde{\sigma}^{D}$ (it is identical to
$\sigma^{D}$),

\item the grim-early-shooting strategy $\widetilde{\sigma}^{DC\left(
K\right)  }$,

\item the grim-late-shooting strategy $\widetilde{\sigma}^{CD\left(  K\right)
}$,

\item the grim-periodic-shooting strategy $\widetilde{\sigma}^{P\left(
M\right)  }$.
\end{enumerate}

\section{Nash Equilibria\label{sec04}}

We will now pesent a sequence of propositions; each one indicates that a
certain strategy pair is a (stationary or nonstationary)\ NE; sometimes this
will only hold for a certain range of $\gamma$ and possibly $p_{1},p_{2}$ values.

\begin{proposition}
\normalfont For every $\gamma \in \left(  0,1\right)  $, the only stationary
NE\ of $\Gamma \left(  \left(  1,1\right)  ,\gamma \right)  $ are $\left(
\sigma^{C},\sigma^{C}\right)  $ and $\left(  \sigma^{D},\sigma^{D}\right)  $.
\end{proposition}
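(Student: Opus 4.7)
The plan is to reduce the problem to a one-dimensional best-response analysis by computing $\partial V_1^S/\partial x_1$ at a fixed $x_2$, and to use symmetry for $P_2$. A stationary profile is determined by the pair $(x_1,x_2)\in[0,1]^2$, so proving the proposition amounts to showing that the only mutual best-response points are $(0,0)$ and $(1,1)$.

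First I would differentiate the closed form (\ref{eq0202}). Writing $a=p_1 x_1$, $b=p_2 x_2$, one has
\[
(1-\gamma)\,V_1^S \;=\; \frac{1-\gamma+\gamma a(1-b)}{1-\gamma(1-a)(1-b)},
\]
so treating $b$ as constant and differentiating in $a$, the numerator of $\partial V_1^S/\partial a$ becomes $\gamma(1-b)\bigl[\,D - N\,\bigr]$ with $D-N = \gamma b$. Consequently
\[
\frac{\partial V_1^S}{\partial x_1}
\;=\;
\frac{\gamma^2\, p_1 p_2\, x_2\,(1-p_2 x_2)}{(1-\gamma)\bigl(1-\gamma(1-p_1 x_1)(1-p_2 x_2)\bigr)^{2}}.
\]
Since $p_1,p_2\in(0,1)$ and $x_2\in[0,1]$, this expression is nonnegative, and it is strictly positive iff $x_2>0$. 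Hence $V_1^S$ is strictly increasing in $x_1$ when $x_2>0$ and independent of $x_1$ when $x_2=0$.

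From this, I would read off $P_1$'s best-response correspondence: if $x_2>0$ the unique best response is $x_1=1$, and if $x_2=0$ every $x_1\in[0,1]$ is a best response. By the symmetry of (\ref{eq0202}) under the exchange of indices, the same holds for $P_2$. A short case analysis then finishes the proof. If $x_1^\ast,x_2^\ast>0$, mutual best response forces $x_1^\ast=x_2^\ast=1$, giving $(\sigma^D,\sigma^D)$. If $x_2^\ast=0$ but $x_1^\ast>0$, then $P_2$'s best response to $x_1^\ast>0$ must be $1$, contradicting $x_2^\ast=0$; the symmetric case is ruled out analogously. The remaining possibility $x_1^\ast=x_2^\ast=0$ is consistent, yielding $(\sigma^C,\sigma^C)$.

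There is essentially no serious obstacle: the only nontrivial step is the derivative computation, whose key feature is the clean factor $\gamma b=\gamma p_2 x_2$ appearing in $D-N$, which both guarantees monotonicity of $V_1^S$ in $x_1$ and pinpoints $x_2=0$ as the unique source of indifference. If I wanted to avoid explicit differentiation, I could instead verify monotonicity directly from the Bellman equation (\ref{eq0201}) by showing that raising $x_1$ strictly increases the coefficient of $V_1^S$'s ``kill $P_2$'' branch whenever $x_2>0$; the two arguments yield the same conclusion.
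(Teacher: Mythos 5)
Your proof is correct and follows essentially the same route as the paper's: the paper computes the finite difference $V_1^S(x_1,x_2)-V_1^S(y_1,x_2)$ and extracts exactly the sign-determining factor $\gamma^2 p_1 p_2 x_2(1-p_2x_2)(x_1-y_1)$ that your derivative computation produces, leading to the same best-response analysis. If anything, your case split is slightly more complete, since you explicitly rule out profiles where exactly one of $x_1^\ast,x_2^\ast$ is zero, a boundary case the paper's third bullet leaves implicit.
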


\begin{proof}
Suppose that $P_{1}$ (resp. $P_{2}$) uses the stationary strategy $\sigma^{S}$
with $\sigma^{S}\left(  1,1\right)  =x_{1}$ (resp. $\sigma^{S}$ with
$\sigma^{S}\left(  1,1\right)  =x_{2}$ ). Then $P_{1}$'s payoff is
\[
V_{1}^{S}\left(  x_{1},x_{2}\right)  =\frac{1-\gamma \left(  1-p_{1}%
x_{1}\left(  1-p_{2}x_{2}\right)  \right)  }{\left(  1-\gamma \right)  \left(
1-\gamma \left(  1-p_{1}x_{1}\right)  \left(  1-p_{2}x_{2}\right)  \right)  }.
\]
Now suppose $P_{1}$ switches to $\sigma_{S}\left(  1,1\right)  =y_{1}$; his
payoff becomes
\[
V_{1}^{S}\left(  y_{1},x_{2}\right)  =\frac{1-\gamma \left(  1-p_{1}%
y_{1}\left(  1-p_{2}x_{2}\right)  \right)  }{\left(  1-\gamma \right)  \left(
1-\gamma \left(  1-p_{1}y_{1}\right)  \left(  1-p_{2}x_{2}\right)  \right)  }.
\]
Let us look at the difference of normalized payoffs%
\begin{align*}
\delta v_{1}  &  =\left(  1-\gamma \right)  \left(  Q_{1}\left(  \left(
1,1\right)  ,x_{1},x_{2}\right)  -Q_{1}\left(  \left(  1,1\right)
,y_{1},x_{2}\right)  \right) \\
&  =\frac{1-\gamma \left(  1-p_{1}x_{1}\left(  1-p_{2}x_{2}\right)  \right)
}{1-\gamma \left(  1-p_{1}x_{1}\right)  \left(  1-p_{2}x_{2}\right)  }%
-\frac{1-\gamma \left(  1-p_{1}y_{1}\left(  1-p_{2}x_{2}\right)  \right)
}{1-\gamma \left(  1-p_{1}y_{1}\right)  \left(  1-p_{2}x_{2}\right)  }\\
&  =\frac{\gamma^{2}p_{1}p_{2}x_{2}(x_{1}-y_{1})(1-p_{2}x_{2})}{((1-\gamma
(1-p_{2}x_{2})(1-p_{1}y_{1}))(1-\gamma(1-p_{2}x_{2})(1-p_{1}x_{1})))}%
\end{align*}
Now, $P_{1}$ has no incentive to switch from $x_{1}$ to $y_{1}$ iff $\delta
v_{1}\geq0$ which is equivalent to%
\[
\gamma^{2}p_{1}p_{2}x_{2}(x_{1}-y_{1})(1-p_{2}x_{2})\geq0
\]
Similarly, $P_{2}$ has no incentive to switch from $x_{2}$ to $y_{2}$ iff
\[
\gamma^{2}p_{1}p_{2}x_{1}(x_{2}-y_{2})(1-p_{1}x_{1})\geq0
\]
Hence, the following hold for $n\in \left \{  1,2\right \}  $.

\begin{enumerate}
\item If $\left(  x_{1},x_{2}\right)  =\left(  0,0\right)  $, $P_{n}$ has no
incentive to change $x_{n}$; $\left(  x_{1},x_{2}\right)  =\left(  0,0\right)
$; hence $\left(  \sigma^{C},\sigma^{C}\right)  $ is a NE.

\item If $\left(  x_{1},x_{2}\right)  =\left(  1,1\right)  $, $P_{n}$ has no
incentive to change $x_{n}$; $\left(  x_{1},x_{2}\right)  =\left(  1,1\right)
$; hence $\left(  \sigma^{D},\sigma^{D}\right)  $ is a NE.

\item If $\left(  x_{1},x_{2}\right)  \in \left(  0,1\right)  \times \left(
0,1\right)  $ then it cannot be a NE, because $P_{n}$ has incentive to change
(unilaterally)\ from $x_{n}$ to $1$.
\end{enumerate}

\noindent This completes the proof.
\end{proof}

\bigskip

\noindent Now we will start looking at NE obtained from combinations of grim strategies.

\bigskip

\begin{proposition}
\normalfont For every $\gamma \in \left(  0,1\right)  $, $\left(  \widetilde
{\sigma}^{C},\widetilde{\sigma}^{C}\right)  $ is a NE\ of $\Gamma \left(
\left(  1,1\right)  ,\gamma \right)  $.
\end{proposition}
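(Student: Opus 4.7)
The plan is to show that $(\widetilde\sigma^{C},\widetilde\sigma^{C})$ already gives each player the largest payoff the game can possibly deliver, so \emph{a fortiori} no unilateral deviation can strictly improve it. First I would observe that when both players use $\widetilde\sigma^{C}$, neither ever shoots; the state stays at $(1,1)$ for every $t\ge 0$, and so each player's payoff equals $V_{1}^{C}=V_{2}^{C}=1/(1-\gamma)$.

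Next I would establish the a~priori upper bound
\[
\overline{Q}_{1}\bigl((1,1),\sigma_{1},\sigma_{2}\bigr)\;\le\;\frac{1}{1-\gamma}
\]
for \emph{every} admissible pair $(\sigma_{1},\sigma_{2})$. Reading the transition diagram, any admissible history starting from $(1,1)$ falls into one of three cases: (i) it stays in $(1,1)$ for every $t$, giving $Q_{1}=\sum_{t\ge 0}\gamma^{t}=1/(1-\gamma)$; (ii) it leaves $(1,1)$ for the first time at some $t^{\ast}<\infty$ via $(1,0)$, in which case the telescoping
\[
\sum_{t=0}^{t^{\ast}-1}\gamma^{t}\;+\;\gamma^{t^{\ast}}\cdot\frac{1}{1-\gamma}\;=\;\frac{1}{1-\gamma}
\]
gives $Q_{1}=1/(1-\gamma)$; or (iii) it leaves $(1,1)$ via $(0,1)$ or $(0,0)$, which yields the strictly smaller payoff $(1-\gamma^{t^{\ast}})/(1-\gamma)$. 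Since the bound holds pathwise, it holds in expectation.

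Combining the two facts, for every admissible $\sigma_{1}\in\Sigma$ one has $\overline{Q}_{1}((1,1),\sigma_{1},\widetilde\sigma^{C})\le 1/(1-\gamma)=\overline{Q}_{1}((1,1),\widetilde\sigma^{C},\widetilde\sigma^{C})$, so $P_{1}$ has no profitable deviation; the symmetric argument handles $P_{2}$, and the Nash property follows.

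I do not expect a real obstacle; the only thing worth flagging is the telescoping identity in case (ii), which is really a design feature of the payoff structure: the stage payoff $q_{1}(1,0)=1/(1-\gamma)$ is calibrated so that successfully killing the opponent yields \emph{exactly} the same total as remaining alive forever, and never more. Consequently, the grim trigger built into $\widetilde\sigma^{C}$ need not actually be invoked in the argument—its presence is what keeps $P_{-n}$ honest off the equilibrium path, but the Nash inequality itself is driven entirely by the cap on the achievable payoff, independent of $p_{1},p_{2}$ and $\gamma$.
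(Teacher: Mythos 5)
Your proof is correct, and it takes a genuinely different route from the paper's. The paper argues in standard grim-trigger fashion: it reduces any deviation to ``shoot at $t=1$, then (since the opponent switches to $\sigma^{D}$) best-respond with $\sigma^{D}$ forever,'' computes $Q_{1}(\sigma^{D},\widetilde{\sigma}^{C})$ explicitly, and verifies algebraically that $V_{1}^{C}-Q_{1}(\sigma^{D},\widetilde{\sigma}^{C})$ is a positive multiple of $\gamma^{3}p_{2}(1-p_{1})$. You instead prove the pathwise cap $Q_{1}(h)\le \frac{1}{1-\gamma}$ for every admissible history from $(1,1)$, which is airtight: your three cases exhaust the possible histories, and the telescoping in case (ii) is exactly the calibration $q_{1}(1,0)=\frac{1}{1-\gamma}$, so killing the opponent at any finite time yields exactly the survive-forever payoff and dying yields strictly less. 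Taking expectations gives $\overline{Q}_{1}((1,1),\sigma_{1},\sigma_{2})\le \frac{1}{1-\gamma}$ for all profiles, and the equilibrium profile attains the cap, so the NE property is immediate. What your approach buys: it is computation-free, it handles arbitrary (nonstationary, randomized) deviations without invoking the MDP best-response reduction in the paper's footnote, and it exposes the stronger fact you flag --- the grim trigger is not load-bearing here, so the plain $(\sigma^{C},\sigma^{C})$ is already a NE against \emph{all} admissible deviations, not merely the stationary ones checked in the paper's first proposition. What the paper's approach buys: it is the template that transfers to the other grim equilibria ($\widetilde{\sigma}^{DC(K)}$, $\widetilde{\sigma}^{P(M)}$), where the equilibrium payoff sits strictly below the cap and the punishment genuinely matters; your cap argument is special to profiles achieving the maximal payoff $\frac{1}{1-\gamma}$ and does not generalize to those cases.
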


\begin{proof}
Suppose that both $P_{1}$ and $P_{2}$ use $\widetilde{\sigma}^{C}$. Then
$P_{1}$'s payoff is
\[
V_{1}^{C}=Q_{1}\left(  \widetilde{\sigma}^{C},\widetilde{\sigma}^{C}\right)
=Q_{1}\left(  \sigma^{C},\sigma^{C}\right)  =\frac{1}{1-\gamma}.
\]
Now suppose $P_{1}$ deviates from $\widetilde{\sigma}^{C}$. It suffices to
examine the case in which $P_{1}$ deviates at $t=1$; furthermore, after
$P_{1}$ deviates (i.e., starting at $t=2$) $P_{2}$ will switch to $\sigma^{D}$
and $P_{1}$ has no incentive to not shoot at any $t\geq2$. \footnote{This is a
consequence of the following fact, which we will often use in the remainder of
the paper. If $P_{n}$ starts using a stationary strategy $\sigma_{n}$ at some
time $t$, then $P_{-n}$'s best response is also a stationary strategy. This is
the case because, for a fixed stationary $\sigma_{n}$, $P_{-n}$ has to solve a
\emph{Markov Decision Process}, for which the optimal strategy is stationary.
For more details see \cite{Sobel1971}.} Hence $P_{1}$ is essentially using the
strategy $\sigma_{1}=\sigma^{D}$ and his total expected \ payoff will then be
\begin{align*}
V_{1}  &  =Q_{1}\left(  \sigma^{D},\widetilde{\sigma}^{C}\right)
=1+\gamma \left(  p_{1}\frac{1}{1-\gamma}+\left(  1-p_{1}\right)  \left(
1+\gamma Q_{1}\left(  \sigma^{D},\sigma^{D}\right)  \right)  \right) \\
&  =1+\gamma \left(  p_{1}\frac{1}{1-\gamma}+\left(  1-p_{1}\right)  \left(
1+\gamma \frac{1-\gamma \left(  1-p_{1}\left(  1-p_{2}\right)  \right)
}{\left(  1-\gamma \right)  \left(  1-\gamma \left(  1-p_{1}\right)  \left(
1-p_{2}\right)  \right)  }\right)  \right) \\
&  =\frac{1-p_{2}(1-p_{1})\gamma^{3}-(1-p_{2})(1-p_{1})\gamma}{\left(
1-\gamma \right)  \left(  1-\gamma \left(  1-p_{1}\right)  \left(
1-p_{2}\right)  \right)  }%
\end{align*}
Now
\begin{align*}
\left(  1-\gamma \right)  \left(  Q_{1}\left(  \widetilde{\sigma}%
^{C},\widetilde{\sigma}^{C}\right)  -Q_{1}\left(  \sigma^{D},\widetilde
{\sigma}^{C}\right)  \right)   &  =\left(  1-\gamma \right)  \left(  V_{1}%
^{C}-V_{1}\right) \\
&  =1-\frac{1-p_{2}(1-p_{1})\gamma^{3}-(1-p_{2})(1-p_{1})\gamma}%
{1-\gamma \left(  1-p_{1}\right)  \left(  1-p_{2}\right)  }\\
&  =\allowbreak \frac{\gamma^{3}p_{2}\left(  1-p_{1}\right)  }{1-\gamma \left(
1-p_{1}\right)  \left(  1-p_{2}\right)  }>0
\end{align*}
Hence $P_{1}$ has no incentive to deviate from $\widetilde{\sigma}^{C}$. The
same can be proved for $P_{2}$. Consequently $\left(  \widetilde{\sigma}%
^{C},\widetilde{\sigma}^{C}\right)  $ is a NE.
\end{proof}

\bigskip

\noindent In the next proposition the strategy profile is a NE only for
\textquotedblleft large enough\textquotedblright \ $\gamma$.

\bigskip

\begin{proposition}
\normalfont There exist some $\gamma_{0}\in \left(  0,1\right)  $ such
that:$\ $for all $\gamma \in \left(  \gamma_{0},1\right)  $, and for all
$K\in \mathbb{N}$, $\left(  \widetilde{\sigma}^{DC\left(  K\right)
},\widetilde{\sigma}^{DC\left(  K\right)  }\right)  \ $is a NE of
$\Gamma \left(  \left(  1,1\right)  ,\gamma \right)  $.
\end{proposition}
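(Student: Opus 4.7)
The plan is to show that, when $P_{2}$ plays $\widetilde{\sigma}^{DC(K)}$, no unilateral deviation by $P_{1}$ strictly improves his payoff; the argument for $P_{2}$ is symmetric. Invoking the MDP observation from the footnote to the previous proof, once $P_{1}$ deviates at some time $t$ and triggers $P_{2}$'s switch to $\sigma^{D}$ from time $t+1$, Proposition~4.1 implies that $P_{1}$'s best continuation from $t+1$ on is stationary and in fact equal to $\sigma^{D}$. It therefore suffices to consider a single on-path time $t$ at which the state is $(1,1)$ and both players are alive, and to compare the plan's continuation value to the one-shot deviation value that is followed by the $(\sigma^{D},\sigma^{D})$ continuation.

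Two phases must be handled. If $t>K$ (Case B), the plan prescribes ``not shoot'', so the only pure deviation is ``shoot''; the deviation value $\frac{p_{1}}{1-\gamma}+(1-p_{1})V_{1}^{D}$ is to be compared with the plan value $V_{1}^{C}=\frac{1}{1-\gamma}$, and after cancellation the inequality reduces to $V_{1}^{C}\ge V_{1}^{D}$, which is easily checked from the closed-form expressions to hold for every $\gamma\in(0,1)$ and every $p_{1},p_{2}\in(0,1)$. If $t\le K$ (Case A), the plan prescribes ``shoot'', so the only pure deviation is ``not shoot''; the deviation value works out to $(1-p_{2})V_{1}^{D}$, independent of $t$, while the plan's continuation $\phi(m)$, as a function of the number $m=K-t+1$ of shooting rounds remaining, satisfies $\phi(0)=V_{1}^{C}$ together with the affine recursion
\[
\phi(m)=\frac{p_{1}(1-p_{2})}{1-\gamma}+(1-p_{1})(1-p_{2})\bigl[1+\gamma\,\phi(m-1)\bigr].
\]
This recursion gives a geometric interpolation between $\phi(0)$ and its fixed point $\phi^{\star}$, which can be checked to equal $(V_{1}^{D}-1)/\gamma$; in particular $\phi(m)\ge\phi^{\star}$ for every $m\ge 0$, so the binding no-deviation condition is the limiting one, $\phi^{\star}\ge(1-p_{2})V_{1}^{D}$, equivalently $V_{1}^{D}\bigl[1-\gamma(1-p_{2})\bigr]\ge 1$.

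The main obstacle is this last inequality together with its uniformity in $K$. I would clear the positive denominator $(1-\gamma)\bigl[1-\gamma(1-p_{1})(1-p_{2})\bigr]$ in the closed form of $V_{1}^{D}$, reducing the condition to a polynomial inequality in $\gamma$ with coefficients depending only on $p_{1},p_{2}$. As $\gamma\uparrow 1$, the left-hand side $V_{1}^{D}-1$ blows up while $\gamma(1-p_{2})V_{1}^{D}$ stays strictly below it by an asymptotic multiplicative factor $1-p_{2}<1$, so continuity yields a threshold $\gamma_{0}\in(0,1)$ above which the inequality holds strictly. Crucially, $\gamma_{0}$ can be chosen independently of $K$, because $K$ enters the whole analysis only through the nonnegative quantity $[\gamma(1-p_{1})(1-p_{2})]^{K}\in[0,1]$ that controls the gap $\phi(K)-\phi^{\star}$ and only improves Case~(A); hence the limiting inequality $\phi^\star\ge(1-p_2)V_1^D$ is the sole constraint, and the symmetric bound for $P_{2}$ can be handled identically, completing the proof.
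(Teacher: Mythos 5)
Your argument is correct and reaches the stated conclusion, but it takes a genuinely different route from the paper in the part that matters most. Both proofs share the same skeleton: split deviations into those during the no-shooting phase ($t>K$) and those during the shooting phase ($t\le K$), and use the grim-trigger-plus-MDP observation to reduce any deviation to ``first deviate at $t$, then play $\sigma^{D}$ forever.'' Your Case B matches the paper's Case I and reduces cleanly to $V_{1}^{C}\ge V_{1}^{D}$, valid for every $\gamma$. For the shooting phase, however, the paper writes out the full time-$0$ payoff of the strategy that skips the shot at round $L\le K$, forms $\delta v_{1}(\gamma)$, evaluates it at $\gamma=1$ to get $(1-p_{1})^{K}(1-p_{2})^{K}p_{2}/(p_{1}+(1-p_{1})p_{2})>0$, and invokes continuity; since that boundary value tends to $0$ as $K\to\infty$, this argument by itself only produces a threshold that could depend on $K$, so the uniformity over $K$ claimed in the statement is not obviously secured there. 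Your route --- comparing continuation values at the deviation time, noting that the planned continuation $\phi(m)$ obeys an affine contraction with fixed point $\phi^{\star}=(V_{1}^{D}-1)/\gamma$ and hence $\phi(m)\ge\phi^{\star}$ for all $m$ (because $\phi(0)=V_{1}^{C}\ge\phi^{\star}$, which is exactly the already-established $V_{1}^{C}\ge V_{1}^{D}$; do state this step explicitly), while the deviation value $(1-p_{2})V_{1}^{D}$ is independent of $t$ and $K$ --- collapses the whole family of constraints to the single inequality $V_{1}^{D}\left(1-\gamma \left(1-p_{2}\right)\right)\ge1$, which holds for all $\gamma$ near $1$ because $V_{1}^{D}\to \infty$ while $1-\gamma(1-p_{2})\to p_{2}>0$; the $K$-uniformity is then immediate. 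So your approach buys a cleaner and genuinely uniform threshold at the cost of being a sufficient rather than sharp condition; the only polish needed is to replace the heuristic ``blows up by an asymptotic multiplicative factor'' sentence with that one-line limit computation.
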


\begin{proof}
Recall that, when both players use $\widetilde{\sigma}^{DC\left(  K\right)  }%
$, $P_{1}$ receives payoff%
\[
V_{1}^{DC\left(  K\right)  }=Q_{1}\left(  \left(  1,1\right)  ,\widetilde
{\sigma}^{DC\left(  K\right)  },\widetilde{\sigma}^{DC\left(  K\right)
}\right)  =\frac{1+\gamma^{K+1}(1-p_{1})^{K}(1-p_{2})^{K}p_{2}-\gamma
(1-p_{1}+p_{1}p_{2})}{\left(  1-\gamma \right)  \left(  1-\gamma \left(
1-p_{1}\right)  \left(  1-p_{2}\right)  \right)  }%
\]
Let us show that $P_{1}$ has no incentive to use a deviating strategy
$\sigma_{1}$.

\begin{enumerate}
\item \noindent \underline{\textbf{Case I:}} Let us first consider strategies
which deviate at times $t\in \left \{  K+1,K+2,...\right \}  $; i.e., they shoot
after the game has entered the no-shooting phase. We actually need to consider
only $\sigma_{1}$ which will shoot\ at $t=K+1$ and with probability one. In
this case
\begin{align*}
Q_{1}\left(  \left(  1,1\right)  ,\widetilde{\sigma}^{DC\left(  K\right)
},\widetilde{\sigma}^{DC\left(  K\right)  }\right)   &  =A+\gamma^{K+1}%
Q_{1}\left(  \left(  1,1\right)  ,\widetilde{\sigma}^{DC\left(  K\right)
},\widetilde{\sigma}^{DC\left(  K\right)  }\right) \\
&  =A+\gamma^{K+1}Q_{1}\left(  \left(  1,1\right)  ,\sigma^{C},\sigma
^{C}\right)  =A+\gamma^{K+1}V_{1}^{C}\\
Q_{1}\left(  \left(  1,1\right)  ,\sigma^{1},\widetilde{\sigma}^{DC\left(
K\right)  }\right)   &  =A+\gamma^{K+1}Q_{1}\left(  \left(  1,1\right)
,\sigma^{1},\widetilde{\sigma}^{DC\left(  K\right)  }\right) \\
&  =A+\gamma^{K+1}Q_{1}\left(  \left(  1,1\right)  ,\sigma^{1},\widetilde
{\sigma}^{DC\left(  K\right)  }\right)
\end{align*}
where $A$ is the expected payoff summed over times $t\in \left \{
0,...,K\right \}  $ and is the same for both strategies used by $P_{1}$. Now,
for the usual reasons, $P_{1}$ will keep shooting at $t\in \left \{
K+2,K+3,...\right \}  $ and we will have%
\begin{align*}
V_{1}  &  =Q_{1}\left(  \left(  1,1\right)  ,\sigma^{1},\widetilde{\sigma
}^{DC\left(  K\right)  }\right)  =p_{1}\frac{1}{1-\gamma}+\left(
1-p_{1}\right)  \left(  1+\gamma Q_{1}\left(  \sigma^{D},\sigma^{D}\right)
\right) \\
&  =p_{1}\frac{1}{1-\gamma}+\left(  1-p_{1}\right)  \left(  1+\gamma
\frac{1-\gamma \left(  1-p_{1}\left(  1-p_{2}\right)  \right)  }{\left(
1-\gamma \right)  \left(  1-\gamma \left(  1-p_{1}\right)  \left(
1-p_{2}\right)  \right)  }\right) \\
&  =\frac{1-p_{2}(1-p_{1})\gamma^{2}-(1-p_{2})(1-p_{1})\gamma}{\left(
1-\gamma \right)  \left(  1-\gamma \left(  1-p_{1}\right)  \left(
1-p_{2}\right)  \right)  }%
\end{align*}
Then we have
\begin{align*}
&  \left(  1-\gamma \right)  \left(  Q_{1}\left(  \left(  1,1\right)
,\widetilde{\sigma}^{C},\widetilde{\sigma}^{C}\right)  -Q_{1}\left(  \left(
1,1\right)  ,\sigma^{D},\widetilde{\sigma}^{C}\right)  \right) \\
&  =\left(  1-\gamma \right)  \left(  V_{1}^{C}-V_{1}\right) \\
&  =1-\frac{1-p_{2}(1-p_{1})\gamma^{2}-(1-p_{2})(1-p_{1})\gamma}%
{1-\gamma \left(  1-p_{1}\right)  \left(  1-p_{2}\right)  }\\
&  =\frac{p_{2}(1-p_{1})\gamma^{2}}{1-\gamma \left(  1-p_{1}\right)  \left(
1-p_{2}\right)  }>0
\end{align*}
Hence $P_{1}$ has no incentive to shoot at $t>K$.

\item \noindent \underline{\textbf{Case II:}} Let us next consider strategies
which deviate at times $t\in \left \{  1,2,...,K\right \}  $, i.e., they do not
shoot during the shooting phase. Again, after the first deviation $P_{1}$ has
no incentive to not shoot. So we only need to consider strategies $\sigma_{1}$
which (a)\ do not shoot at some $t=L\in \left \{  1,2,...,K\right \}  $ and
(b)\ shoot at all $t\in \left \{  1,2,...,L-1,L+1,...\right \}  $. Then, by the
usual arguments,
\begin{align*}
V_{1}  &  =Q_{1}\left(  \left(  1,1\right)  ,\sigma_{1},\widetilde{\sigma
}^{DC\left(  K\right)  }\right) \\
&  =1\\
&  +\gamma \left(  p_{1}\left(  1-p_{2}\right)  \frac{1}{1-\gamma}+\left(
1-p_{1}\right)  \left(  1-p_{2}\right)  \right) \\
&  +\gamma^{2}\left(  \left(  1-p_{1}\right)  p_{1}\left(  1-p_{2}\right)
^{2}\frac{1}{1-\gamma}+\left(  1-p_{1}\right)  ^{2}\left(  1-p_{2}\right)
^{2}\right) \\
&  +...\\
&  +\gamma^{L-1}\left(  \left(  1-p_{1}\right)  ^{L-2}p_{1}\left(
1-p_{2}\right)  ^{L-1}\frac{1}{1-\gamma}+\left(  1-p_{1}\right)  ^{L-1}\left(
1-p_{2}\right)  ^{L-1}\right) \\
&  +\gamma^{L}\left(  \left(  1-p_{1}\right)  ^{L-1}p_{1}\left(
1-p_{2}\right)  ^{L}\frac{1}{1-\gamma}+\left(  1-p_{1}\right)  ^{L}\left(
1-p_{2}\right)  ^{L}\left(  1+\gamma V_{1}^{D}\right)  \right) \\
&  =1+\sum_{k=1}^{L-1}\gamma^{k}\left(  \left(  1-p_{1}\right)  ^{k-1}%
p_{1}\left(  1-p_{2}\right)  ^{k}\frac{1}{1-\gamma}+\left(  1-p_{1}\right)
^{k}\left(  1-p_{2}\right)  ^{k}\right) \\
&  +\gamma^{L}\left(  1-p_{1}\right)  ^{L-1}\left(  1-p_{2}\right)  ^{L}%
p_{1}\frac{1}{1-\gamma}\\
&  +\gamma^{L}\left(  1-p_{1}\right)  ^{L}\left(  1-p_{2}\right)  ^{L}\left(
1+\gamma \frac{1-\gamma \left(  1-p_{1}\left(  1-p_{2}\right)  \right)
}{\left(  1-\gamma \right)  \left(  1-\gamma \left(  1-p_{1}\right)  \left(
1-p_{2}\right)  \right)  }\right)
\end{align*}
Let $\ $%
\begin{align*}
\delta v_{1}\left(  \gamma \right)   &  =\left(  1-\gamma \right)  \left(
Q_{1}\left(  \left(  1,1\right)  ,\widetilde{\sigma}^{DC\left(  K\right)
},\widetilde{\sigma}^{DC\left(  K\right)  }\right)  -Q_{1}\left(  \left(
1,1\right)  ,\sigma_{1},\widetilde{\sigma}^{DC\left(  K\right)  }\right)
\right) \\
&  =V_{1}^{DC\left(  K\right)  }-V_{1}.
\end{align*}
Note that $\delta v_{1}\left(  \gamma \right)  $ is well defined and continuous
for all $\gamma \in \left[  0,1\right]  $, because the factor $\left(
1-\gamma \right)  $ cancels the $\left(  1-\gamma \right)  $ factor in the
denominator of
\[
Q_{1}\left(  \left(  1,1\right)  ,\widetilde{\sigma}^{DC\left(  K\right)
},\widetilde{\sigma}^{DC\left(  K\right)  }\right)  -Q_{1}\left(  \left(
1,1\right)  ,\sigma_{1},\widetilde{\sigma}^{DC\left(  K\right)  }\right)  .
\]
After a considerable amount of algebra\footnote{Using Maple once again.} we
find that
\[
\delta v_{1}\left(  1\right)  =\frac{\left(  1-p_{1}\right)  ^{K}\left(
1-p_{2}\right)  ^{K}p_{2}}{p_{1}+\left(  1-p_{1}\right)  p_{2}}>0.
\]
Since $\delta v_{1}\left(  \gamma \right)  $ is continuous, there will exist
some $\gamma_{0}\in \left(  0,1\right)  $ such that $\delta v_{1}\left(
\gamma \right)  $ will be positive for every $\gamma \in \left(  \gamma
_{0},1\right)  $ and for every $K\in \mathbb{N}$. Hence, for such values,
$P_{1}$ has no incentive to deviate during the shooting phase.
\end{enumerate}

\noindent Putting together Cases I\ and II\ we see that $P_{1}$ has no
incentive to deviate from $\widetilde{\sigma}^{DC\left(  K\right)  }$. The
same is proved, similarly, for $P_{2}$. Hence $\left(  \widetilde{\sigma
}^{DC\left(  K\right)  },\widetilde{\sigma}^{DC\left(  K\right)  }\right)  $
is a NE.
\end{proof}

\bigskip

\noindent Next we present a negative result:\ mutual late shooting is not a NE.

\bigskip

\begin{proposition}
\normalfont For every $\gamma \in \left(  0,1\right)  $ and every $K\in
\mathbb{N}$, $\left(  \widetilde{\sigma}^{CD\left(  K\right)  },\widetilde
{\sigma}^{CD\left(  K\right)  }\right)  $ is not a NE of $\Gamma \left(
\left(  1,1\right)  ,\gamma \right)  $.
\end{proposition}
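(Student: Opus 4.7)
For $K\geq 2$ I will exhibit a strictly profitable one-shot deviation for $P_{1}$; the case $K=1$ is degenerate, since $\widetilde{\sigma}^{CD(1)}=\sigma^{D}$ and $(\sigma^{D},\sigma^{D})$ is already a NE by Proposition~4.1. The deviation is the minimal ``shoot one turn early'' move: $P_{1}$ agrees with $\widetilde{\sigma}^{CD(K)}$ on $t\in\{1,\ldots,K-2\}$, fires at $t=K-1$, and plays $\sigma^{D}$ thereafter; by the MDP remark of Proposition~4.3 together with Proposition~4.1, $\sigma^{D}$ is indeed his unique continuation best reply once the grim trigger installs $\sigma^{D}$ in $P_{2}$. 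Because the two strategies coincide for $t\in\{1,\ldots,K-2\}$, it suffices to compare expected payoffs from $t=K-1$ onward, discounted relative to $t=K-1$.

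The key structural observation is that, under the prescribed play, from $t=K-1$ onward $P_{1}$ collects $1$ for the unopposed current round and then faces a $(\sigma^{D},\sigma^{D})$ subgame starting in state $(1,1)$ at $t=K$; by the obvious time-shift this conditional payoff equals $V_{1}^{D}$---indeed it \emph{is} the right-hand side of the recurrence $V_{1}^{D}=1+\gamma p_{1}(1-p_{2})\tfrac{1}{1-\gamma}+\gamma(1-p_{1})(1-p_{2})V_{1}^{D}$. Under the deviation, $P_{1}$ shoots alone at $t=K-1$: with probability $p_{1}$ he kills $P_{2}$ and secures $V_{1}^{C}=\tfrac{1}{1-\gamma}$, while with probability $1-p_{1}$ he misses, and the continuation is again the same $(\sigma^{D},\sigma^{D})$ subgame starting in $(1,1)$, worth $V_{1}^{D}$. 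Hence the conditional deviation value is
\[
W=\frac{p_{1}}{1-\gamma}+(1-p_{1})V_{1}^{D}.
\]

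Subtracting, $W-V_{1}^{D}=p_{1}\bigl(V_{1}^{C}-V_{1}^{D}\bigr)$; and a one-line reduction of the closed forms gives
\[
V_{1}^{C}-V_{1}^{D}=\frac{\gamma p_{2}}{(1-\gamma)\bigl(1-\gamma(1-p_{1})(1-p_{2})\bigr)}>0.
\]
Restoring the overall discount, the full-payoff advantage of the deviation is $\gamma^{K-1}p_{1}(V_{1}^{C}-V_{1}^{D})>0$, so $P_{1}$ strictly gains by firing one turn early and $(\widetilde{\sigma}^{CD(K)},\widetilde{\sigma}^{CD(K)})$ fails to be a NE. The main delicate step is the identification of the normal-play continuation value with $V_{1}^{D}$, which rests only on the Markov property and a time-shift; intuitively, waiting one more turn forces $P_{1}$ into a simultaneous exchange of fire instead of a free unilateral shot, costing him precisely $p_{1}(V_{1}^{C}-V_{1}^{D})$ in expectation.
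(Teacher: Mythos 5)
Your proof is correct and rests on the same profitable deviation as the paper's (fire one round early, at $t=K-1$, and fire ever after), but your bookkeeping differs in two respects worth recording. First, the reduction to conditional continuation values is cleaner: after discarding the common prefix $t\leq K-2$, normal play from $t=K-1$ onward is worth exactly $V_{1}^{D}$ by the time shift you describe, the deviation is worth $W=\frac{p_{1}}{1-\gamma}+\left(  1-p_{1}\right)  V_{1}^{D}$, and the gain $\gamma^{K-1}p_{1}\left(  V_{1}^{C}-V_{1}^{D}\right)  =\frac{\gamma^{K}p_{1}p_{2}}{\left(  1-\gamma \right)  \left(  1-\gamma \left(  1-p_{1}\right)  \left(  1-p_{2}\right)  \right)  }>0$ drops out in one line, whereas the paper grinds through closed forms. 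Note that your value of the deviation actually disagrees with the paper's: the paper's recurrence contains the term $\left(  1-p_{1}\right)  \left(  1+\gamma V_{1}^{D}\right)  $ where the correct time shift gives $\left(  1-p_{1}\right)  V_{1}^{D}$ (the extra ``$1+\gamma \cdot$'' inserts a spurious safe round after the miss), so the paper's $\delta v_{1}=-p_{2}\gamma^{K}\left(  1-\gamma \left(  1-p_{1}\right)  \right)  /\left(  1-\gamma \left(  1-p_{1}\right)  \left(  1-p_{2}\right)  \right)  $ should read $-p_{1}p_{2}\gamma^{K}/\left(  1-\gamma \left(  1-p_{1}\right)  \left(  1-p_{2}\right)  \right)  $; both are negative, so the proposition's conclusion is untouched, but your computation is the accurate one. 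Second, your aside about $K=1$ is a genuine catch rather than a mere degeneracy: $\sigma^{CD\left(  1\right)  }=\sigma^{D}$, hence $\left(  \widetilde{\sigma}^{CD\left(  1\right)  },\widetilde{\sigma}^{CD\left(  1\right)  }\right)  =\left(  \sigma^{D},\sigma^{D}\right)  $ \emph{is} a NE by the first proposition of Section 4, and the ``shoot one turn early'' deviation would require acting at $t=0$, which is not a legal move; so the statement as written fails at $K=1$ and should be restricted to $K\geq2$, exactly as you do.
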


\begin{proof}
Recall that
\[
V_{1}^{CD\left(  K\right)  }=Q_{1}\left(  \widetilde{\sigma}^{CD\left(
K\right)  },\widetilde{\sigma}^{CD\left(  K\right)  }\right)  =\frac
{1-\gamma^{K}p_{2}-\gamma(1-p_{1})\left(  1-p_{2}\right)  }{\left(
1-\gamma \right)  \left(  1-\gamma \left(  1-p_{1}\right)  \left(
1-p_{2}\right)  \right)  }%
\]
We just need to show that $P_{1}$ has one profitable deviating strategy
$\sigma_{1}$. Let $\sigma_{1}$ be:\  \ do not shoot at $t\in \left \{
1,2,...,K-2\right \}  $, shoot at $t\in \left \{  K-1,K,...\right \}  $; in other
words start shooting one turn before the shooting phase starts. Then, by the
usual arguments, $P_{1}$'s payoff is%
\begin{align*}
V_{1}  &  =Q_{1}\left(  \left(  1,1\right)  ,\sigma_{1},\widetilde{\sigma
}^{CD\left(  K\right)  }\right)  =\sum_{k=0}^{K-2}\gamma^{k}+\gamma
^{K-1}\left(  p_{1}\frac{1}{1-\gamma}+\left(  1-p_{1}\right)  \left(  1+\gamma
V_{D}\right)  \right) \\
&  =\frac{\left(  -\gamma^{K+1}p_{2}(1-p_{1})+1-\gamma(1-p_{2})(1-p_{1}%
)\right)  }{\left(  1-\gamma \right)  (1-\gamma(1-p_{2})(1-p_{1}))}%
\end{align*}
By appropriate substitutions and algebraic calculations, we get
\begin{align*}
\delta v_{1}  &  =\left(  1-\gamma \right)  \left(  Q_{1}\left(  \widetilde
{\sigma}^{CD\left(  K\right)  },\widetilde{\sigma}^{CD\left(  K\right)
}\right)  -Q_{1}\left(  \sigma_{1},\widetilde{\sigma}^{CD\left(  K\right)
}\right)  \right) \\
&  =-\frac{p_{2}\gamma^{K}\left(  1-\gamma \left(  1-p_{1}\right)  \right)
}{(1-\gamma(1-p_{2})(1-p_{1}))}<0
\end{align*}
Hence $P_{1}$ has incentive to switch to $\sigma_{1}$ and $\left(
\widetilde{\sigma}^{CD\left(  K\right)  },\widetilde{\sigma}^{CD\left(
K\right)  }\right)  $ is not a NE.
\end{proof}

\begin{proposition}
\normalfont For every $M\in \mathbb{N}$, there exists a $\delta_{M}>0$ such
that: if
\begin{align*}
\gamma_{M}  &  =\frac{9}{10},\\
p_{M}  &  =\frac{1-e^{-M}}{10},\\
I_{M}  &  =\left(  \gamma_{M}-\delta_{M},\gamma_{M}+\delta_{M}\right)
\times \left(  p_{M}-\delta_{M},p_{M}+\delta_{M}\right)  \times \left(
p_{M}-\delta_{M},p_{M}+\delta_{M}\right)  ,
\end{align*}
then $\left(  \widetilde{\sigma}^{P\left(  M\right)  },\widetilde{\sigma
}^{P\left(  M\right)  }\right)  $ is a NE of $\Gamma \left(  \left(
1,1\right)  ,\gamma,\mathbf{p}\right)  $ for every $\left(  \gamma,p_{1}%
,p_{2}\right)  \in I_{M}$.
\end{proposition}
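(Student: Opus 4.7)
The plan is to verify that, at the reference point $(\gamma_M,p_M,p_M)$, no unilateral deviation from $\widetilde{\sigma}^{P(M)}$ is profitable, and then to extend this to the open box $I_M$ by continuity. As in the proofs of Propositions~4.2 and~4.3, once $P_1$ first deviates at some time $t^*$, $P_2$ switches to $\sigma^D$ starting at $t^*+1$, and by the Markov-decision-process argument in the footnote together with Proposition~4.1, $P_1$'s best continuation from $t^*+1$ is also $\sigma^D$. Since $\widetilde{\sigma}^{P(M)}$ has period $M+1$ and, along a non-deviating path, returns to state $(1,1)$ at the end of each period, it suffices to consider deviations within the first period: either an \emph{early shot} at some $L\in\{1,\ldots,M\}$ (a scheduled no-shoot turn) or a \emph{refusal to shoot} at $L=M+1$ (the scheduled shoot turn).

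For each of these $M+1$ deviations I would write $P_1$'s expected payoff in closed form. The early-shot deviation at $L$ gives
\[
V_1^{\text{early}}(L)=\sum_{k=0}^{L-1}\gamma^{k}+\gamma^{L}\left(\frac{p_1}{1-\gamma}+(1-p_1)(1+\gamma V_1^D)\right),
\]
since only $P_1$ shoots at $L$ and, on a miss, the game reverts to $(\sigma^D,\sigma^D)$ from state $(1,1)$ at time $L+1$. The refusal deviation at $M+1$ gives
\[
V_1^{\text{refuse}}=\sum_{k=0}^{M}\gamma^{k}+\gamma^{M+1}(1-p_2)(1+\gamma V_1^D),
\]
since $P_2$ still shoots at $M+1$ (killing $P_1$ with probability $p_2$), and on survival both revert to $\sigma^D$. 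Substituting the formulas for $V_1^D$ (Section~3.3) and $V_1^{P(M)}$ (Section~3.6) renders each of the differences $\delta v_1^{\text{early}}(L;\gamma,p_1,p_2):=(1-\gamma)(V_1^{P(M)}-V_1^{\text{early}}(L))$ and $\delta v_1^{\text{refuse}}(\gamma,p_1,p_2):=(1-\gamma)(V_1^{P(M)}-V_1^{\text{refuse}})$ as a rational function of $(\gamma,p_1,p_2)$, continuous on a neighborhood of $(\gamma_M,p_M,p_M)$ because the denominators are nonzero there.

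The remaining step is to evaluate these $M+1$ differences at $(\gamma_M,p_M,p_M)$ and verify strict positivity; this is a finite algebraic check, to be delegated to symbolic algebra as elsewhere in the paper. The calibration $p_M=(1-e^{-M})/10\to 1/10$ with an exponentially small correction appears chosen so that, on the one hand, the cooperative prefix $\sum_{k=0}^{L-1}\gamma_M^{k}$ together with the threat term $(1-p_1)\gamma^{L+1}V_1^D$ dominates the immediate expected gain $p_1\gamma^L/(1-\gamma)$ of an early shot (whose hit probability is only $\approx 1/10$), and on the other hand that the analogous inequality holds for the refusal deviation. Once pointwise positivity is established for each $L\in\{1,\ldots,M+1\}$, joint continuity and the finiteness of the list of deviations yield a common $\delta_M>0$ for which every $\delta v_1>0$ throughout $I_M$; symmetry in $(p_1,p_2)$ handles $P_2$ and completes the NE claim. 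The main obstacle is the algebraic verification at the reference point: the binding deviation is a priori unknown and may depend on $M$, so one either handles all $M+1$ inequalities uniformly in $L$, or identifies the worst-case $L$ (plausibly $L=1$, which discounts the retaliation least, or $L=M+1$, which sacrifices the most cooperative payoff) and reduces the problem to a single inequality parametrized by $M$.
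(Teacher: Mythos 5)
Your architecture is the same as the paper's: reduce to single\hyp{}period deviations, write each deviation payoff in closed form via reversion to $(\sigma^{D},\sigma^{D})$, check strict positivity of the normalized differences at the reference point $(\gamma_{M},p_{M},p_{M})$, and spread to $I_{M}$ by continuity. You are even more careful than the paper in one respect: you also list the ``refuse to shoot at $t=M+1$'' deviation, which the paper's proof silently omits (it is easily dismissed, since $\frac{1}{1-\gamma}\geq V_{1}^{P(M)}\geq V_{1}^{D}$ gives $p_{1}\frac{1}{1-\gamma}+(1-p_{1})V_{1}^{P(M)}\geq V_{1}^{D}$). But the entire content of the proposition lies in the step you defer. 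The statement quantifies over all $M\in\mathbb{N}$, so ``evaluate the $M+1$ differences at the reference point and verify positivity'' is not a finite symbolic computation: it is an infinite family of inequalities indexed by $(M,L)$, and the specific calibration $\gamma_{M}=\frac{9}{10}$, $p_{M}=\frac{1-e^{-M}}{10}$ must be shown to satisfy all of them. This is exactly where the paper does its work: setting $p_{1}=p_{2}=p$, it reduces the sign of $\delta v_{1}$ to a polynomial $f_{M,K}(\gamma,p)$ (with $K=L-1$), disposes of all deviations with $K\leq M-2$ uniformly by the single sufficient condition $\gamma<(1-p)^{1/(M-2)}$, which $(\gamma_{M},p_{M})$ is checked to satisfy, and then handles the remaining case $K=M-1$ by an explicit monotonicity estimate $h_{1}(M)>h_{2}(M)>h_{2}(2)>0$ for $M\geq2$ together with a direct evaluation at $M=1$. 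Without some such uniform-in-$(M,L)$ argument your proof is incomplete. Note also that your conjectured worst case ($L=1$ or $L=M+1$) is off: consistently with Proposition 4.4, the binding deviation is to pre-empt the scheduled shot by one round, i.e.\ $L=M$ ($K=M-1$), and that is the case requiring the delicate estimate.

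A secondary point: in $V_{1}^{\mathrm{early}}(L)$ you write the miss branch as $(1-p_{1})\left(1+\gamma V_{1}^{D}\right)$, which inserts an extra quiet round before the punishment begins; the paper's computation for this proposition uses $(1-p_{1})V_{1}^{D}$, i.e.\ mutual shooting starts at $L+1$. Your version only overstates the deviation payoff, so it cannot certify a false equilibrium, but since the $L=M$ inequality is the tight one you should use the correct continuation before attempting the verification.
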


\begin{proof}
Recall that
\[
V^{P\left(  M\right)  }=Q_{1}\left(  \widetilde{\sigma}^{P\left(  M\right)
},\widetilde{\sigma}^{P\left(  M\right)  }\right)  \text{ }=\frac
{1-\gamma^{M+1}\left(  1-p_{1}\left(  1-p_{2}\right)  )\right)  }{\left(
1-\gamma \right)  \left(  1-\gamma^{M+1}\left(  1-p_{1}\right)  \left(
1-p_{2}\right)  \right)  }%
\]
We will prove that, for every $\left(  \gamma,p_{1},p_{2}\right)  \in I_{M}$,
$P_{1}$ has no incentive to deviate from $\widetilde{\sigma}^{P\left(
M\right)  }$ (the proof for$P_{2}$ is identical).

Suppose that $P_{1}$ uses some strategy $\sigma_{1}$ by which he shoots at
$P_{2}$ at some $t\neq i\cdot \left(  M+1\right)  $. For the usual reasons, it
suffices to consider strategies by which $P_{1}$ shoots in the first period
and with probability one. So suppose that $P_{1}$ abstains for all
$t\in \left(  1,...,K\right)  $ and then shoots at $P_{2}$ at some $t^{\prime
}=K+1\leq M$. Then the following two possibilities exist.

\begin{enumerate}
\item With probability $p_{1}$: $P_{2}$ is killed and $P_{1}$ receives payoff
$\frac{1}{1-\gamma}$.

\item With probability $1-p_{1}$: $P_{2}$ is missed, $P_{1}$ receives payoff
one and for all subsequent rounds $P_{2}$ will always shoot at $P_{1}$ with
probability one. In this case $P_{1}$'s best response at time $t^{\prime
\prime}>t^{\prime}$ is to always shoot at $P_{2}$ with probability one; hence,
starting at the $\left(  t^{\prime}+1\right)  $-th round, both players use the
$\sigma^{D}$ strategy. The total expected payoff received by $P_{1}$ in this
case is $Q_{1}\left(  \sigma^{D},\sigma^{D}\right)  $.
\end{enumerate}

\noindent Hence, assuming $P_{1}$ will first shoot at $t=K+1\in \left \{
1,...,M\right \}  $, by the above reasoning $P_{1}$'s expected total payoff
will be%
\[
V_{1}=Q_{1}\left(  \sigma_{1},\widetilde{\sigma}^{P\left(  M\right)  }\right)
\text{ }=\left(  \sum_{k=0}^{K}\gamma^{k}\right)  +\gamma^{K+1}\left(
p_{1}\frac{1}{1-\gamma}+\left(  1-p_{1}\right)  Q_{1}\left(  \sigma^{D}%
,\sigma^{D}\right)  \right)
\]
Substituting the $Q_{1}\left(  \widetilde{\sigma}^{P\left(  M\right)
},\widetilde{\sigma}^{P\left(  M\right)  }\right)  $\ and $Q_{1}\left(
\sigma^{D},\sigma^{D}\right)  $ values and performing a considerable amount of
algebra we get%
\begin{align*}
\delta v_{1}\left(  \gamma,p_{1},p_{2}\right)   &  =\left(  1-\gamma \right)
\left(  Q_{1}\left(  \widetilde{\sigma}^{P\left(  M\right)  },\widetilde
{\sigma}^{P\left(  M\right)  }\right)  -Q_{1}\left(  \sigma_{1},\widetilde
{\sigma}^{P\left(  M\right)  }\right)  \text{ }\right) \\
&  =\frac{p_{2}\gamma^{K+2}\left(  -\left(  1-p_{1}\right)  ^{2}\left(
1-p_{2}\right)  {\gamma}^{M+1}+\left(  1-p_{1}\right)  \left(  1-p_{2}\right)
{\gamma}^{M-K}-{\gamma}^{M-K-1}+1-p_{1}\right)  }{\left(  1-\gamma
^{M+1}\left(  1-p_{1}\right)  \left(  1-p_{2}\right)  \right)  \left(
1-\gamma \left(  1-p_{1}\right)  \left(  1-p_{2}\right)  \right)  }%
\end{align*}
Setting $p_{1}=p_{2}=p$ we get
\[
\delta v_{1}\left(  \gamma,p,p\right)  =\frac{p\gamma^{K+2}\left(  -\left(
1-p\right)  ^{3}{\gamma}^{M+1}+\left(  1-p\right)  ^{2}{\gamma}^{M-K}-{\gamma
}^{M-K-1}+1-p\right)  }{\left(  1-\gamma^{M+1}\left(  1-p\right)  ^{2}\right)
\left(  1-\gamma \left(  1-p\right)  ^{2}\right)  }%
\]
The sign of $\delta v_{1}\left(  \gamma,p,p\right)  $ is the same as that of%
\begin{align*}
f_{M,K}\left(  \gamma,p\right)   &  =-\left(  1-p\right)  ^{3}{\gamma}%
^{M+K+3}+\left(  1-p\right)  ^{2}{\gamma}^{M+2}-{\gamma}^{M+1}+\left(
1-p\right)  \gamma^{K+2}\\
&  =f_{1,M,K}\left(  \gamma,p\right)  +f_{2,M,K}\left(  \gamma,p\right)
\end{align*}
with
\begin{align*}
f_{M,K,1}\left(  \gamma,p\right)   &  =\left(  1-p\right)  ^{2}{\gamma}%
^{M+2}-\left(  1-p\right)  ^{3}{\gamma}^{M+K+3}\\
f_{M,K,2}\left(  \gamma,p\right)   &  =-{\gamma}^{M+1}+\left(  1-p\right)
\gamma^{K+2}%
\end{align*}
Now we consider the following cases.

\begin{enumerate}
\item \noindent \underline{\textbf{Case I: }$K\leq M-2$}. Then $M-K-1\geq1$.
For all $M$ and $K\in \left \{  1,...,M-2\right \}  $ we have
\[
\left(  1-p\right)  ^{2}>\left(  1-p\right)  ^{3}\text{ and }{\gamma}%
^{M+2}>{\gamma}^{M+K+3}%
\]
hence we will always have $f_{1,M,K}\left(  \gamma,p\right)  >0$. To also have
$f_{M,K,2}\left(  \gamma,p\right)  >0$ for a specific $K$, it suffices that
\[
{\gamma}^{M-K-1}<1-p\Leftrightarrow \gamma<\left(  1-p\right)  ^{\frac
{1}{M-K-1}}%
\]
To have $f_{M,K,2}\left(  \gamma,p\right)  >0$ for \emph{all} $K\in \left \{
1,...,M-2\right \}  $, it suffices that
\begin{equation}
\gamma<\left(  1-p\right)  ^{\frac{1}{M-2}} \label{eq0301}%
\end{equation}
In other words, for all $M$ we have:
\begin{align*}
\gamma &  \in \left(  0,\left(  1-p\right)  ^{\frac{1}{M-2}}\right)
\Rightarrow \left(  \forall K\in \left \{  1,...,M-2\right \}  :f_{M,K}\left(
\gamma,p\right)  >0\right) \\
\gamma &  \in \left(  0,\left(  1-p\right)  ^{\frac{1}{M-2}}\right)
\Rightarrow \left(  \forall K\in \left \{  1,...,M-2\right \}  :\delta
v_{1}\left(  \gamma,p,p\right)  >0\right)
\end{align*}

\item \noindent \underline{\textbf{Case II: }$K=M-1$}. In this case we want
\begin{align*}
-\left(  1-p\right)  ^{3}{\gamma}^{M+K+3}+\left(  1-p\right)  ^{2}{\gamma
}^{M+2}-{\gamma}^{M+1}+\left(  1-p\right)  \gamma^{K+2}  &  >0\Leftrightarrow
\\
-\left(  1-p\right)  ^{3}{\gamma}^{M+M-1+3}+\left(  1-p\right)  ^{2}{\gamma
}^{M+2}-{\gamma}^{M+1}+\left(  1-p\right)  \gamma^{M-1+2}  &
>0\Leftrightarrow \\
-\left(  1-p\right)  ^{3}{\gamma}^{2M+2}+\left(  1-p\right)  ^{2}{\gamma
}^{M+2}-{\gamma}^{M+1}+\left(  1-p\right)  \gamma^{M+1}  &  >0\Leftrightarrow
\\
-\left(  1-p\right)  ^{3}{\gamma}^{M+1}+\left(  1-p\right)  ^{2}{\gamma}%
-{1}+\left(  1-p\right)   &  >0\Leftrightarrow \\
-\left(  1-p\right)  ^{3}{\gamma}^{M+1}+\left(  1-p\right)  ^{2}{\gamma}-p  &
>0\Leftrightarrow \\
-\left(  1-p\right)  ^{3}{\gamma}^{M+1}+{\gamma}p^{2}-\left(  2\gamma
+1\right)  p+\gamma &  >0
\end{align*}
Let us define the function
\[
\overline{f}_{M}\left(  \gamma,p\right)  =-\left(  1-p\right)  ^{3}{\gamma
}^{M+1}+{\gamma}p^{2}-\left(  2\gamma+1\right)  p+\gamma
\]
By continuity, in a sufficiently small neighborhood of $\left(  \gamma
_{M},p_{M}\right)  =\left(  \frac{9}{10},\frac{1-e^{-M}}{10}\right)  $,
$\ $the$\ $sign of $\overline{f}_{M}\left(  \gamma,p\right)  $ will be the
same as that of%
\[
h_{1}\left(  M\right)  =\overline{f}_{M}\left(  \gamma_{M},p_{M}\right)
=-\frac{\left(  9+e^{-M}\right)  ^{3}\left(  \frac{9}{10}\right)  ^{M+1}%
}{1000}+\frac{9\left(  1-e^{-M}\right)  ^{2}}{1000}+\frac{31}{50}%
+\frac{7e^{-M}}{25}%
\]
and it suffices to show that $h_{1}\left(  M\right)  >0$ for all $M$. To this
end we first note that
\[
h_{1}\left(  1\right)  =-\frac{\left(  9+e^{-1}\right)  ^{3}\left(  \frac
{9}{10}\right)  ^{2}}{1000}+\frac{9\left(  1-e^{-1}\right)  ^{2}}{1000}%
+\frac{31}{50}+\frac{7e^{-1}}{25}=\allowbreak0.6\allowbreak0703...>0
\]
Also, letting%
\[
h_{2}\left(  M\right)  =-\frac{\left(  9+e^{-M}\right)  ^{3}\left(  \frac
{9}{10}\right)  ^{M+1}}{1000}+\frac{31}{50}%
\]
we have
\[
\forall M:h_{1}\left(  M\right)  >h_{2}\left(  M\right)  .
\]
Now, $h_{2}\left(  M\right)  $ is strictly increasing in $M$ and $h_{2}\left(
2\right)  =0.06422\,2...$ . Consequently%
\[
\forall M\geq2:h_{1}\left(  M\right)  >h_{2}\left(  M\right)  >h_{2}\left(
2\right)  >0
\]
Hence finally we have
\[
\forall M\geq1:\overline{f}_{M}\left(  \gamma_{M},p_{M}\right)  =h_{1}\left(
M\right)  >0.
\]

\end{enumerate}

\noindent Now, to have%
\[
\forall M,\forall K\in \left \{  1,...,M-1\right \}  :f_{M,K}\left(  \gamma
_{M},p_{M}\right)  >0
\]
we must ensure that (\ref{eq0301}) holds for $\left(  \gamma,p\right)  =$
$\left(  \gamma_{M},p_{M}\right)  $. In other words, we want $\gamma
_{M}<\left(  1-p_{M}\right)  ^{\frac{1}{M-2}}$ or, equivalently,
\[
\frac{9}{10}<\left(  1-\frac{1-e^{-M}}{10}\right)  ^{\frac{1}{M-2}}=\left(
\frac{9}{10}+\frac{e^{-M}}{10}\right)  ^{\frac{1}{M-2}}.
\]
This holds: since for all $M\in \mathbb{N}$ we have $\frac{9}{10}+\frac{e^{-M}%
}{10}<1$, we also have
\[
\frac{9}{10}<\left(  \frac{9}{10}+\frac{e^{-M}}{10}\right)  <\left(  \frac
{9}{10}+\frac{e^{-M}}{10}\right)  ^{\frac{1}{M-2}}.
\]

\noindent In short we have shown that
\begin{align*}
\forall M,\forall K  &  \in \left \{  1,...,M-1\right \}  :f_{M,K}\left(
\gamma_{M},p_{M}\right)  >0\\
\forall M,\forall K  &  \in \left \{  1,...,M-1\right \}  :\delta v_{1}\left(
\gamma_{M},p_{M},p_{M}\right)  >0
\end{align*}
For all $M$ and $K$, $\delta v_{1}\left(  \gamma,p_{1},p_{2}\right)  $ is a
continuous function. Hence, for all $M$, there exists some $\delta_{M}>0$ such
that
\[
\forall K\in \left \{  1,...,M-1\right \}  ,\forall \left(  \gamma,p_{1}%
,p_{2}\right)  \in I_{M}:\delta v_{1}\left(  \gamma,p_{1},p_{2}\right)  >0
\]
which shows that $P_{1}$ has no incentive to deviate from $\widetilde{\sigma
}^{P\left(  M\right)  }$. The same argument can be applied to $P_{2}$.
\ Hence, for every $\left(  \gamma,p_{1},p_{2}\right)  \in I_{M}$ , $\left(
\widetilde{\sigma}^{P\left(  M\right)  },\widetilde{\sigma}^{P\left(
M\right)  }\right)  $ is a NE of $\Gamma \left(  \left(  1,1\right)
,\gamma,\mathbf{p}\right)  $.
\end{proof}

\section{Some Additional Remarks\label{sec05}}

Let us now justify our terms \textquotedblleft cooperating\textquotedblright%
\ and \textquotedblleft defecting\textquotedblright \ strategy. From the
results of Section \ref{sec03}, for $n\in \left \{  1,2\right \}  $, we have
\begin{align*}
\overline{Q}_{n}\left(  \left(  1,1\right)  ,\sigma^{C},\sigma^{C}\right)   &
=\frac{1}{1-\gamma}\\
\overline{Q}_{n}\left(  \left(  1,1\right)  ,\sigma^{D},\sigma^{D}\right)   &
=\frac{1-\gamma \left(  1-p_{n}\left(  1-p_{-n}\right)  \right)  }{\left(
1-\gamma \right)  \left(  1-\gamma \left(  1-p_{1}\right)  \left(
1-p_{2}\right)  \right)  }.
\end{align*}
It follows that
\[
\overline{Q}_{n}\left(  \left(  1,1\right)  ,\sigma^{C},\sigma^{C}\right)
-\overline{Q}_{n}\left(  \left(  1,1\right)  ,\sigma^{D},\sigma^{D}\right)
=\frac{\gamma p_{-n}}{\left(  1-\gamma \right)  \left(  1-\gamma \left(
1-p_{1}\right)  \left(  1-p_{2}\right)  \right)  }>0.
\]
In short, just like in PD, it is more profitable for both players to not-shoot
rather than shoot. Because in our formulation there is no direct profit from
killing the opponent, both $\left(  \sigma^{C},\sigma^{C}\right)  $ and
$\left(  \sigma^{D},\sigma^{D}\right)  $ are NE; however, for both players,
$\left(  \sigma^{C},\sigma^{C}\right)  $ is more profitable NE\ than $\left(
\sigma^{D},\sigma^{D}\right)  $. This is the reason for calling $\left(
\sigma^{C},\sigma^{C}\right)  $ a cooperating, and $\left(  \sigma^{D}%
,\sigma^{D}\right)  $ a defecting strategy.

All this may be surprising, since one would expect that, in a duel, each
player's goal will be to eliminate his opponent. It may be supposed that the
higher profitability of $\left(  \sigma^{C},\sigma^{C}\right)  $ follows from
our choice of not assigning any direct payoff to killing one's opponent. But
this is not true. Even with a positive \textquotedblleft killing
payoff\textquotedblright, $\overline{Q}_{n}\left(  \left(  1,1\right)
,\sigma^{C},\sigma^{C}\right)  $ can still be greater than $\overline{Q}%
_{n}\left(  \left(  1,1\right)  ,\sigma^{D},\sigma^{D}\right)  $, provided
$\gamma$ is sufficiently close to one\footnote{This, as well as additional
results regarding the positive killing payoff case, will be reported in a
future publication.}. The reason for the superiority of $\left(  \sigma
^{C},\sigma^{C}\right)  $ is this:\ if a positive payoff is assigned to
survival, this, compounded over an infinite number of turns, can always
outweigh the killing payoff. Hence our model can be understood as a more
\textquotedblleft pacifist\textquotedblright \ version than the usual duel
model\footnote{This point has also been raised by Donald Knuth in the context
of the truel \cite{Knuth1973}. For example, he remarks that \textquotedblleft
a player who passes is guaranteeing that his opponent has no reason to shoot
back, as far as the opponent's survival is concerned \textquotedblright.}.

Let us now compare our static duel to the PD. In both the PD\ and the duel,
cooperation is more profitable than defection. While $\left(  \sigma
^{C},\sigma^{C}\right)  $ is not a NE in PD, $\left(  \sigma^{D},\sigma
^{D}\right)  $ is a NE in both of them. However, both the duel and the
repeated version of PD, possess several NE\ in grim strategies; the common
characteristic of all such equilibria is that they promote cooperation or, in
other words, punish defection (shooting). In fact, similarly to the case of
repeated PD, it might be possible to prove a \textquotedblleft Folk
Theorem\textquotedblright \ for the static duel as well; namely that every
feasible and individually rational payoff is a NE\ for $\gamma$ sufficiently
close to one. We intend to study this question in the future.

\section{Conclusion\label{sec06}}

We have formulated the simultaneous shooting static duel as a discounted
stochastic game. We have shown that it has two Nash equilibria in stationary
strategies, namely the \textquotedblleft always-shooting\textquotedblright%
\ and the \textquotedblleft never-shooting\textquotedblright \ strategies; in
addition several nonstationary, \textquotedblleft
cooperation-promoting\textquotedblright \ Nash equilibria also exist. \ In the
future we intend to extend the study of the static duel in several directions.

First, we want to extend our study and obtain similar results for two
variants:\ (a)\ the case of non-zero killing payoff and (b)\ the case of
terminal-only payoffs. In addition, we want to formulate and study a version
of the static duel in which each player wants to kill his opponent in the
\emph{shortest possible time}.

Secondly, we hope to prove a form of \textquotedblleft \emph{Folk
Theorem}\textquotedblright, namely that every every feasible and individually
rational payoff is a NE\ for $\gamma$ sufficiently close to one.

Finally, we want to formulate the static \emph{Nuel} (i.e., the duel-like game
which involves $N$ players shooting at each other) as a discounted stochastic
game and extend our results for this case.


\begin{thebibliography}{99}                                                                                               %


\bibitem {Amengual2005a}Amengual, P. and Toral, R. \textquotedblleft
Distribution of winners in truel games\textquotedblright. \emph{AIP Conference
Proceedings} (2005), pp. 128--141.

\bibitem {Amengual2005b}Amengual, P. and Toral, R. \textquotedblleft A Markov
chain analysis of truels\textquotedblright. \emph{Proceedings of the 8th
Granada Seminar on Computational Physics} (2005).

\bibitem {Amengual2006}Amengual, P. and Toral, R. \textquotedblleft Truels, or
survival of the weakest.\textquotedblright \  \emph{Computing in Science}.

\bibitem {Barron2013}Barron, E. N. \emph{Game theory: an introduction}. John
Wiley \& Sons (2013).

\bibitem {Bossert2002}Bossert, W., Brams S.J. and Kilgour, D.M.
\textquotedblleft Cooperative vs non-cooperative truels: little agreement, but
does that matter?.\textquotedblright \  \emph{Games and Economic Behavior}, vol.
40 (2002), pp. 185-202.

\bibitem {Brams1997}Brams, S.J., and Kilgour, D.M. \textquotedblleft The
truel\textquotedblright. \emph{Mathematics Magazine}, vol. 70 (1997), pp. 315-326.

\bibitem {Brams2001}Brams, S.J., and Kilgour, D.M. \textquotedblleft Games
that End in a Bang or a Whimper\textquotedblright. preprint, CV Starr Center
for Applied Economics (2001).

\bibitem {Brams2003}Brams, S.J., Kilgour, D.M. \ and Dawson, B.
\textquotedblleft Truels and the Future.\textquotedblright \  \emph{Math
Horizons}, vol. 10 (2003), pp. 5-8.

\bibitem {Dorraki2019}Dorraki, M., Allison, A. and Abbott, D..
\textquotedblleft Truels and strategies for survival.\textquotedblright%
\  \emph{Scientific Reports}, vol. 9 (2019), pp. 1-7.

\bibitem {Dresher1961}Dresher, M. \emph{Games of strategy: theory and
applications}. Rand Corp. (1961).

\bibitem {Gardner1966}Gardner, M. : \emph{New Mathematical Puzzles and
Diversions}. (1966), pp. 42-49.

\bibitem {Karlin1959}Karlin, S., \emph{Mathematical Methods and Theory in
Games, Programming, and Economics}, vol. 2, Addison- Wesley (1959).

\bibitem {Kilgour1971}Kilgour, D. M. \textquotedblleft The simultaneous
truel.\textquotedblright \  \emph{International Journal of Game Theory}, vol. 1
(1971), pp. 229-242.

\bibitem {Kilgour1975}Kilgour, D. M. \textquotedblleft The sequential
truel.\textquotedblright \  \emph{International Journal of Game Theory}, vol. 4
(1975), pp. 151-174.

\bibitem {Kilgour1977}Kilgour, D. M. \textquotedblleft Equilibrium points of
infinite sequential truels.\textquotedblright \  \emph{International Journal of
Game Theory}, vol. 6 (1977), pp. 167-180.

\bibitem {Kinnaird1946}Kinnaird, C. \emph{Encyclopedia of puzzles and
pastimes}. Secaucus (1946).

\bibitem {Knuth1973}Knuth, D.E. \textquotedblleft The Triel: A New
Solution.\textquotedblright \  \emph{Journal of Recreational Mathematics}, vol.
6 (1972), pp. 1-7.

\bibitem {Larsen1948}Larsen, H.D. \textquotedblleft A Dart
Game.\textquotedblright \  \emph{American Mathematical Monthly}, (1948) pp. 640-41.

\bibitem {Mosteller1987}Mosteller, F. \emph{Fifty challenging problems in
probability with solutions}. Courier Corporation (1987).

\bibitem {Shubik1954}Shubik, M. \textquotedblleft Does the Fittest Necessarily
Survive?\textquotedblright, \emph{Readings in Game Theory and Political
Behavior}, Doubleday.

\bibitem {Sobel1971}Sobel, M.J. \textquotedblleft Noncooperative stochastic
games.\textquotedblright \  \emph{The Annals of Mathematical Statistics}, vol.
42 (1971), pp. 1930-1935.

\bibitem {Toral2005}Toral, R., and Amengual, P.. \textquotedblleft
Distribution of winners in truel games.\textquotedblright \  \emph{AIP
Conference Proceedings}, vol. 779. American Institute of Physics (2005).

\bibitem {Wegener2021}Wegener, M. and Mutlu, E.. \textquotedblleft The good,
the bad, the well-connected.\textquotedblright \  \emph{International Journal of
Game Theory}, vol. 50 (2021), pp. 759-771.

\bibitem {Xu2012}Xu, X. \textquotedblleft Game of the truel.\textquotedblright%
\  \emph{Synthese}, vol. 185 (2012), pp. 19-25.

\bibitem {Zeephongsekul1991}Zeephongsekul, P. \textquotedblleft Nash
Equilibrium Points Of Stochastic $N$-Uels.\textquotedblright \  \emph{Recent
Developments in Mathematical Programming}. CRC Press (1991), pp. 425-452.
\end{thebibliography}
\end{document}